\documentclass[12pt]{article}
\usepackage{amsmath}
\usepackage{amssymb}
\usepackage[a4paper]{geometry}
\usepackage{amsthm}
\usepackage{csquotes}
\usepackage{subcaption}
\usepackage{graphicx}
\usepackage{tikz}
\usepackage{tikz-cd}
\usepackage{mathrsfs}
\usepackage{bm}
\usepackage{bbold}
\usepackage{lscape}
\usepackage{slashed}
\usepackage[hidelinks]{hyperref}
\usepackage{listings}
\usepackage{verbatim}
\usepackage{orcidlink}

\usepackage{soul, color}

\hypersetup{
    colorlinks = true,
    linkcolor = {blue},
    citecolor = {blue},
    anchorcolor = {green},
    citecolor = {red},
    filecolor = {cyan},
    menucolor = {red},
    runcolor = {cyan},
    urlcolor = {magenta}
}
\oddsidemargin=0.2cm
\newcommand{\twidth}{6in}
\textwidth=\twidth
%+++++++++++++++++++++++++++++
% Josh's new commands
%

%\renewcommand{\familydefault}{\sfdefault} %Sets accessible font

%\newcommand{\hilight}[1]{\colorbox{yellow}{#1}}

%Caligraphic letters

%Gothic letters

\newcommand{\su}{{\mathfrak{su}}}

%Roman letters
\renewcommand{\O}{{\mathrm{O}}}
\newcommand{\SU}{{\mathrm{SU}}}

%Black-board-bold letters and sets

\newcommand{\R}{{\mathbb{R}}}

\newcommand{\Z}{{\mathbb{Z}}}

%quaternion shorthands

\newcommand{\bi}{{\bm{i}}}
\newcommand{\bj}{{\bm{j}}}
\newcommand{\bk}{{\bm{k}}}

%eqn environments
\newcommand{\beq}{\begin{equation}}
\newcommand{\eeq}{\end{equation}}
\newcommand{\bea}{\begin{eqnarray}}
\newcommand{\eea}{\end{eqnarray}}
\newcommand{\bal}{\begin{align}}
\newcommand{\eal}{\end{align}}
\newcommand{\bml}{\begin{multline}}
\newcommand{\eml}{\end{multline}}
%Misc symbols

\newcommand{\wh}{\widehat}
\newcommand{\lto}{\longrightarrow}

\def \d{\mathrm{d}}
\newcommand{\ol}{\overline}
\newcommand{\tr}{{\rm tr}\, }

\newcommand{\id}{{\rm Id}}

%\renewcommand{\vec}{\overrightarrow}
%\renewcommand{\to}{\longrightarrow}

%formatting-style commands

\newcommand{\ip}[1]{\langle#1\rangle}

\newcommand{\abs}[1]{\left\lvert#1\right\rvert}

\makeatletter
\newcommand\xleftrightarrow[2][]{%
  \ext@arrow 9999{\longleftrightarrowfill@}{#1}{#2}}
\newcommand\longleftrightarrowfill@{%
  \arrowfill@\leftarrow\relbar\rightarrow}
\makeatother

%Theorem environments
\newtheorem{theorem}{Theorem}

\newtheorem{example}[theorem]{Example}

\newtheorem{lemma}[theorem]{Lemma}

\newtheorem{proposition}[theorem]{Proposition}
\newtheorem{remark}[theorem]{Remark}

\begin{document}
\renewcommand*{\thefootnote}{\fnsymbol{footnote}}
\begin{titlepage}
\begin{center}
{\Large
Locations of JNR skyrmions \par}

%\vspace{10mm}

\vspace{10mm}
{\Large Josh Cork\footnote{Email address: \texttt{josh.cork@leicester.ac.uk}}$^{\rm 1}$\,\orcidlink{0000-0002-9006-0108} and Linden Disney--Hogg\footnote{Email address: \texttt{a.l.disney-hogg@leeds.ac.uk}.}$^{\rm 2}$\,\orcidlink{0000-0002-6597-2463}}\\[10mm]

\noindent {\em ${}^{\rm 1}$ School of Computing and Mathematical Sciences\\
University of Leicester, University Road, Leicester, United Kingdom
}\\
\smallskip
\noindent {\em ${}^{\rm 2}$ School of Mathematics, University of Leeds\\ Woodhouse Lane, Leeds, United Kingdom }\\[10mm]
{\Large \today}
\vspace{15mm}
\begin{abstract}
    We develop and prove new geometric and algebraic characterisations for locations of constituent skyrmions, as well as their signed multiplicity, using Sutcliffe's JNR ansatz. Some low charge examples, and their similarity to BPS monopoles, are discussed. In addition, we provide Julia code for the further numerical study and visualisation of JNR skyrmions. 
\end{abstract}
\end{center}
\end{titlepage}
\renewcommand*{\thefootnote}{\arabic{footnote}}
\section{Introduction}
Skyrmions are topological solitons in a low-energy model of nuclei \cite{skyrme1962nucl,manton2022skyrmions}. They are classified by a topological charge $N$: an integer physically interpreted as the baryon number. On Euclidean $\R^3$, the Skyrme field equations do not admit explicit exact solutions, and so much work has gone into finding frameworks for approximating skyrmions. These toy models explain qualitative behaviour and are more amenable to analytic techniques which yield insight into baryon structure and allow a quantum treatment. In recent work of Sutcliffe \cite{sutcliffe2024jnr}, a simple model of Skyrme fields arising from Jackiw--Nohl--Rebbi (JNR) instantons has been found. In this paper we seek to illuminate one key aspect of these approximate skyrmions: their constituent locations. 

Our investigation yields two key results. 
\begin{enumerate}
    \item[Theorem \ref{thm:signs-locations-via-f}.] We identify the locations with the critical points of a Morse function, and show that the nature of the critical point determines the signed multiplicity of the location.
    \item[Theorem \ref{thm: 2-skymrion locations are foci}.] For JNR skyrmions with poles in a plane in $\R^3$ we give a geometric description of the locations, in particular seeing that in the $N=2$ case these are foci arising from the famous Poncelet porism.
\end{enumerate}

In addition, through numerical approaches (for which we provide Julia code) we classify the behaviour of the locations of skyrmions that form the 1-parameter twisted line scattering of Walet \cite{Walet:1996he}. This latter development is important because of its connection with the twisted line scattering of monopoles \cite{HoughtonSutcliffe1996monopole}. An unexplained analogy between Euclidean $\SU(2)$ Bogomol'nyi--Prasad--Sommerfield (BPS) monopoles and skyrmions has long been observed, for the most part due to the success of the rational map approximation of skyrmions \cite{HoughtonMantonSutcliffe1998rational}. However the rational map approximation fails to fully capture the analogy between the two twisted line scatterings. Here we demonstrate how the JNR approximation, despite being slightly coarser than rational maps in terms of energy, still captures the intricate behaviour of location creation and annihilation. This suggests that future explanation of the analogy could stem from studying the JNR approximation, or instantons more generally.

This paper is organised as follows. In \S\ref{sec: skyrme fields} we lay out the preliminary definitions of what a skyrmion is and the corresponding JNR approximation of Sutcliffe. In \S\ref{sec: locations} we define what we mean by constituent locations, as well as the associated sign. There we shall also show how to calculate these locations and signs in terms of certain critical points. \S\ref{sec: examples} gives examples of these locations and their signs for baryon numbers $N=1,2,3$, including the twisted line scattering. \S\ref{sec:concl} discusses some further outlook and open problems.

\section{Skyrme fields and the JNR ansatz}\label{sec: skyrme fields}

A Skyrme field is a smooth map $U:\R^3\lto\SU(2)$ satisfying the space-compactifying boundary condition $U(x)\to1$ as $|x|\to\infty$. This boundary condition identifies $U$ with a map $S^3\lto S^3$ which has a topological charge, the degree $N\in\Z=\pi_3(S^3)$, physically identified with the baryon number. A skyrmion is a Skyrme field which minimises the Skyrme energy
\begin{align}\label{sky-energy}
    E=\int(|L|^2+|L\wedge L|^2)\,\d^3x,
\end{align}
where $L=U^{-1}\d U$, and $|\xi|^2\,\d^3x:=\tfrac{1}{2}\tr(\xi\wedge\star\xi^\dagger)$ for $\su(2)$-valued\footnote{Later we identify $\SU(2)$ with the unit quaternions, in which case this formula is the same but with $\tfrac{1}{2}\tr$ replaced with the real part $\Re$, and hermitian conjugate replaced with quaternionic conjugate.} forms. No analytic solutions are known, but many locally minimal energy fields may be found numerically \cite{battye2002skyrmions,gudnasonhalcrow2022smorgaasbord}. As such, to complement numerics and allow for a (semi-classical) quantum treatment, approximate descriptions are required. We remark that although the term \textit{skyrmion} is typically reserved for the energy minimisers, we shall often abuse this terminology to refer to generic Skyrme field configurations as well.

One of the most powerful approximations of skyrmions is due to Atiyah--Manton \cite{AtiyahManton1989} and uses self-dual gauge fields on $\R^4$ (instantons). In brief, one obtains a degree $N$ Skyrme field as the holonomy of an $N$-instanton along all lines parallel to a given direction. Recently \cite{CorkHarlandWinyard2021gaugedskyrmelowbinding,corkhalcrow2022adhm,harland2023approximating} it was shown that these holonomies may be approximated directly using the Atiyah--Drinfeld--Hitchin--Manin (ADHM) description \cite{ADHM1978construction} of the instanton moduli space. These reproduce accurate approximations of minimal energy skyrmions \cite{corkhalcrow2022adhm} with energies within $2\%$ of numerical minimisers, and allow for the construction of configuration spaces useful for the purpose of quantisation \cite{corkhalcrow2024quantization}.

Typically a numerical approximation is required in order to obtain a Skyrme field from instanton holonomies. This may be done by discretising the line along which the holonomy is taken and approximating the parallel transport along each segment. In \cite{harlandsutcliffe2023rational}, Harland--Sutcliffe used an ultra-coarse discretisation of four intervals $\R=(-\infty,-\mu]\cup[-\mu,0]\cup[0,\mu]\cup[\mu,\infty)$ to generate explicit Skyrme fields from ADHM data, dependent on an auxiliary parameter $\mu>0$, with this value being chosen to find the minimal-energy approximation. Somewhat surprisingly, this gives a better approximation of the minimal-energy $N=1$ solution than taking the full holonomy. Recently in \cite{sutcliffe2024jnr}, Sutcliffe showed how to generate these explicit Skyrme fields from the class of JNR instantons \cite{Jackiw1977}, and showed they too provide good approximations of true skyrmions, with energies only slightly higher than those of the full holonomy when $N>1$. It is this class of Skyrme fields that are the main focus of this paper.

For $N+1$ fixed \textit{JNR poles} $a_i\in\R^4$ and \textit{weights} $\lambda_i>0$, for $i=0,\dots,N$ and auxiliary parameter $\mu>0$, define
\begin{align}\label{JNR-sky}
    U=\frac{\psi(\mu)\ol{\psi(-\mu)}}{|\psi(\mu)\ol{\psi(-\mu)}|},
\end{align}
where $\psi(\mu)=|\zeta|^2\rho(\mu)+\iota(\mu)$ with
\begin{align}\label{JNR-sky-functions}
    \begin{aligned}
        \zeta&=\sum_{i=0}^N\frac{\lambda_i^2(x-a_i)}{|x-a_i|^2},\quad \rho(\mu)=\sum_{i=0}^N\frac{\lambda_i^2}{|x-a_i+\mu|^2},\\
        \iota(\mu)&=\sum_{i,j,k=0}^N\frac{\mu\lambda_i^2\lambda_j^2\lambda_k^2(a_i-a_j)(x-a_i)\ol{(a_i-a_k)}}{|x-a_i+\mu|^2|x-a_j+\mu|^2|x-a_i|^2|x-a_k|^2},
    \end{aligned}
\end{align}
where now $x=\vec{x}$ is a pure imaginary quaternion representing a point in $\R^3$. So long as $\mu>0$, \eqref{JNR-sky} defines a degree $N$ Skyrme field \cite{sutcliffe2024jnr,harlandsutcliffe2023rational} in terms of the free data of JNR poles and weights.

One can show \cite{sutcliffe2024jnr} that $\iota(\mu)$ is a pure imaginary quaternion. In particular, when all of the poles are pure imaginary quaternions, it thus follows that $\psi(-\mu)=\ol{\psi(\mu)}$ and so \eqref{JNR-sky} is a rational function $U = [{\psi(\mu) / \abs{\psi(\mu)}}]^2$.

    It is important to note that the formulation here differs slightly to that of \cite{sutcliffe2024jnr}. There the Skyrme field \eqref{JNR-sky} is instead given by $U=\frac{\psi(-\mu)\ol{\psi(\mu)}}{|\psi(-\mu)\ol{\psi(\mu)}|}$, which is the inverse of our field \eqref{JNR-sky}. The reason for our choice of convention is that in the derivation in \cite{sutcliffe2024jnr} the ADHM construction for \textit{anti-self-dual} JNR instantons is used, which leads to Skyrme fields with negative degree. This adjustment means we are studying skyrmions with positive degree given by $N>0$. This discrepancy did not matter for the results of \cite{sutcliffe2024jnr} as the energy is invariant under $U\mapsto U^{-1}$, however this is important for us since, as we shall discuss in the next sections, we will be interested in the local degree of these fields for which the sign plays a crucial role.
\section{Skyrmion locations and local degree}\label{sec: locations}
For a Skyrme field $U:\R^3\lto\SU(2)$, the \textit{anti-vacua} are the points for which $U=-1$. For the $N=1$ skyrmion, which is spherically symmetric, there is one anti-vacuum at the centre. Furthermore, for a product of $N$ well-separated unit skyrmions, the individual skyrmions are seen to be roughly positioned at the anti-vacua. With this in mind, we will think of the points where $U=-1$ as the \textit{locations} of constituent $\pm1$-skyrmions which form a generic $N$-skyrmion.
\begin{remark}
    This definition of location in terms of anti-vacua is also natural from an energetic perspective when one considers the \textit{pion mass potential} $m_\pi\tr(1-U)$, which is maximal at $U=-1$.
\end{remark}
The locations need to be counted with signed multiplicity due to the fact that we know that $N$ is the topological degree of $U$ as a map $S^3 \lto S^3$, which is the signed count of preimages of a generic point in the target $S^3$, with the sign accounting for whether the map is locally orientation preserving or reversing at the preimage. Another equivalent definition of this sign comes from the \emph{local degree} \cite[p.~136]{Hatcher2002}. These are defined as follows: let $\lbrace x_i \rbrace = U^{-1}(-1)$, and pick $V_i \ni x_i$ small disjoint open neighbourhoods in $\mathbb{R}^3 \subset S^3$ mapped into $W \ni -1$ a small open neighbourhood in $S^3$. These give rise to homomorphisms $H_3(V_i, V_i \setminus \lbrace x_i \rbrace) \to H_3(W, W\setminus\lbrace-1\rbrace)$ which define the local degree, and this definition is independent of the choice of $V_i$. It is a known result that the sum of the local degrees of $U$ at the $x_i$ is equal to the degree of $U$ \cite[Proposition 2.30]{Hatcher2002}. 

Choosing $V_i$ to be sufficiently small balls centred at the $x_i$ and denoting $S_i := \partial V_i$, then by the definition of relative homology and the fact $V_i\setminus\lbrace x_i\rbrace$ deformation retracts onto $S_i$ the local degree is given by the homomorphism $H_2(S_i) \to H_2(\partial W)$. This we shall denote with $\deg(\left. U \right \rvert_{S_i})$. In order to make progress with this definition of local degree, at this point we shall restrict attention to Skyrme fields $U$ which take the form $U=\left(\frac{q}{\abs{q}}\right)^2$ where $q=q_s+q_v$ is a quaternion with real and imaginary parts $q_s$ and $q_v$ respectively. We use the freedom in the choice of sign of $q$ to take $q_s \geq 0$ everywhere. 
This naturally includes several explicit families of Skyrme fields: rational fields arising via the Harland--Sutcliffe approach from $S^1$-invariant ADHM data \cite{harlandsutcliffe2023rational}, and JNR Skyrme fields where the poles have been taken to be pure imaginary quaternions. For Skyrme fields of this form we see that knowledge of the imaginary part $q_v$ is sufficient to calculate the local degree. 

\begin{lemma}\label{lem:local-degree}
Let $S_i$ be the sphere centred at $x_i \in U^{-1}(-1)$ of `sufficiently small' radius $\epsilon$. The local degree of $U$ at $x_i$, $\deg(\left. U \right \rvert_{S_i})$, is the degree of $\left. \wh{q_v}\right\rvert_{S_i} : S^2 \to S^2$. 
\end{lemma}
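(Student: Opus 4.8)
The plan is to make the topological definition of the local degree entirely concrete in quaternionic coordinates near $-1\in S^3$, after which $\wh{q_v}|_{S_i}$ drops out of a short algebraic computation of the imaginary part of $U$. First I would fix $\epsilon$. Since $U(x_i)=-1$, the quaternion $q(x_i)$ is nonzero, and as $(q(x_i)/\abs{q(x_i)})^2=-1$ it is purely imaginary; hence $q_s(x_i)=0$ and $q_v(x_i)\neq0$. By continuity, and because $x_i$ is isolated in $U^{-1}(-1)$, for all sufficiently small $\epsilon$ we have on the closed ball $\ol{V_i}=\ol{B_\epsilon(x_i)}$ that $q\neq0$, that $q_v\neq0$, that $U\neq-1$ away from $x_i$, and that $U(\ol{V_i})$ lies in the open geodesic cap $W=\{p\in S^3:\re p<-\cos\delta\}$ for some small $\delta>0$. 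I would then identify $\partial W=\{-\cos\delta+\sin\delta\,\hat n:\hat n\in S^2\}$ with $S^2$ via $\hat n\leftrightarrow-\cos\delta+\sin\delta\,\hat n$. Since the imaginary part of any point of $W\setminus\{-1\}$ is nonzero, there is a deformation retraction $r:W\setminus\{-1\}\to\partial W$ given in these coordinates by $p\mapsto-\cos\delta+\sin\delta\,\wh{p_v}$, where $p_v$ denotes the imaginary part of $p$; as $r$ is a deformation retraction, the homomorphism $H_2(S_i)\to H_2(\partial W)$ defining $\deg(U|_{S_i})$ is the map induced on $H_2$ by $r\circ U|_{S_i}:S_i\to\partial W$. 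Under the identification $\partial W\cong S^2$ this reads $\deg(U|_{S_i})=\deg\!\big(\wh{U_v}|_{S_i}:S_i\to S^2\big)$, where $U=U_s+U_v$ is the splitting of $U$ into real and imaginary parts.

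The algebraic step is then immediate. Writing $q=q_s+q_v$ and using $q_v^2=-\abs{q_v}^2$ gives $q^2=(q_s^2-\abs{q_v}^2)+2q_sq_v$, so
\[
U=\frac{q^2}{\abs{q}^2}=\frac{q_s^2-\abs{q_v}^2}{\abs{q}^2}+\frac{2q_sq_v}{\abs{q}^2},\qquad\text{hence}\qquad U_v=\frac{2q_sq_v}{\abs{q}^2}.
\]
On $S_i$ we have $U\neq-1$, equivalently $q_s\neq0$; since $q_s\geq0$ everywhere this forces $q_s>0$ on $S_i$, so $\wh{U_v}=\wh{q_sq_v}=\wh{q_v}$ there. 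Combined with the preceding paragraph, this gives $\deg(U|_{S_i})=\deg(\wh{q_v}|_{S_i})$.

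The one point that genuinely needs care — and the step I expect to be the principal obstacle — is the orientation bookkeeping: one must check that composing with $r$ and with the parametrisation $\hat n\leftrightarrow-\cos\delta+\sin\delta\,\hat n$ does realise the canonical isomorphism $H_2(W\setminus\{-1\})\cong H_2(S^2)$ implicit in the symbol $\deg(U|_{S_i})$, rather than its negative. Equivalently, one must verify that the assignment $p\mapsto\wh{p_v}$, viewed as a map from a small sphere about $-1$ in $S^3$ to $S^2$, has degree $+1$. I would settle this either by a direct orientation computation in the tangent space $T_{-1}S^3$ identified with the imaginary quaternions, or — more cheaply — by testing both sides of the claimed identity on the $N=1$ hedgehog $U=\exp(f(r)\hat x)$ with $f(0)=\pi$: there $q/\abs{q}=\cos\tfrac f2+\sin\tfrac f2\,\hat x$, so $\wh{q_v}|_{S_i}=\hat x$ is the degree-one identity map of $S^2$, in agreement with the unit skyrmion's local degree being $+1$. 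Since the reduction above does not depend on which field $U$ of the given form is taken, this single check fixes the sign once and for all.
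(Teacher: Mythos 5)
Your proposal is correct and rests on the same core computation as the paper's proof, namely writing $U=\frac{(q_s^2-\abs{q_v}^2)+2q_sq_v}{\abs{q}^2}$ so that the imaginary direction of $U$ on $S_i$ is exactly $\wh{q_v}$ (using $q_s>0$ there), and then concluding by a homotopy/retraction argument; the paper phrases this as $\left.U\right\rvert_{S_i}$ being homotopic to $-1+2\delta_m\left.\wh{q_v}\right\rvert_{S_i}$ with $\delta=q_s/\abs{q_v}$. Your extra step of pinning down the orientation of the identification of a small sphere about $-1$ with $S^2$ (via the hedgehog check) addresses a sign convention the paper's proof leaves implicit in ``the result follows,'' and is a sound way to fix it.
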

\begin{proof}
    We have 
    \begin{align*}
U = \frac{(q_s^2 - \abs{q_v}^2) + 2 q_s q_v}{q_s^2 + \abs{q_v}^2} = \frac{(-1 +\delta^2) + 2\delta \wh{q_v}}{1 + \delta^2},
\end{align*}
where $\delta = q_s / \abs{q_v} \geq 0$. For any sufficiently small $\epsilon$ we have $\min_{S_i} \delta >0$, and moreover as $\epsilon \to 0$ we have $\max_{S_i}\delta := \delta_m \to 0$. For sufficiently small $\epsilon$ we thus have that $\left. U\right\rvert_{S_i}$ is homotopic to $-1+2\delta_m \left.\wh{q_v}\right\rvert_{S_i}$, and the result follows.
\end{proof}

\subsection{JNR skyrmion locations}
We shall now restrict attention to JNR Skyrme fields \eqref{JNR-sky} and shall only consider JNR poles $a_i$ which are pure imaginary: this is so that $U=[\psi(\mu)/|\psi(\mu)|]^2$ aligning us with the class of Skyrme fields discussed above. For such fields, the points at which $U=-1$ are given by when $\psi(\mu)$ is pure-imaginary, which corresponds with the zeros of the rational function
\begin{align}
    \zeta(x)=\sum_{i=0}^N\frac{\lambda_i^2(x-a_i)}{|x-a_i|^2}.
\end{align}
Before proceeding further we will state some facts about these locations that Sutcliffe identifies \cite{sutcliffe2024jnr}.
\begin{itemize}
    \item In the limit $|a_0|=\lambda_0\lto\infty$ one obtains the 't Hooft ansatz \cite{CorriganFairlie1977} for which the locations are given precisely by the remaining poles.
    \item If one weight $\lambda_j \to \infty$, then the locations tend to $\lbrace{a_i \, | \, i \neq j}\rbrace$.
    \item The locations are independent of $\mu$.
\end{itemize}
One statement made in \cite{sutcliffe2024jnr} is that, due to the topological degree being $N$, the number of locations cannot exceed $N$. As we shall discuss below, and see via examples later, this statement is not true in general for JNR Skyrme fields; on the other hand this statement is true for Skyrme fields generated via the rational map ansatz \cite{HoughtonMantonSutcliffe1998rational} as rational maps are holomorphic and thus orientation preserving. In private correspondence with Sutcliffe he has indicated that the intended meaning to convey was that generically\footnote{In order to allow $N+1$ locations, there must be negatively signed locations and at least one location of multiplicity $>1$. However, based on observations through several examples, the only situations we expect to allow higher multiplicity are when the poles are planar, where as we show in Section \ref{sec: N=2} (see Remark \ref{remark: N planar poles}) negatively signed locations are prohibited.} the number of locations can never equal $N+1$, i.e. the poles are not to be viewed as the locations.

We can moreover make some generic statements.
\begin{proposition}\label{prop.locations-convex-hull}
    The locations of the skyrmions are in the convex hull of the poles. 
\end{proposition}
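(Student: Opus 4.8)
The plan is to use the characterisation established immediately above: the locations are precisely the zeros of the rational map
\[
\zeta(x)=\sum_{i=0}^N\frac{\lambda_i^2(x-a_i)}{|x-a_i|^2}.
\]
The key point is purely algebraic --- the equation $\zeta(x)=0$ rearranges directly into an expression of $x$ as a convex combination of the poles $a_0,\dots,a_N$, with nonnegative coefficients built from the weights and the distances to the poles.

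First I would note that no zero of $\zeta$ can sit at a pole: since the $N+1$ poles are distinct, near $a_j$ only the single summand $\lambda_j^2(x-a_j)/|x-a_j|^2$ is unbounded, so $|\zeta(x)|\to\infty$ as $x\to a_j$ (using $\lambda_j>0$, which is part of the JNR data). Hence at any location $x$ the quantities $c_i:=\lambda_i^2/|x-a_i|^2$ are finite and strictly positive. The equation $\zeta(x)=0$ then reads $\sum_{i=0}^N c_i(x-a_i)=0$, and dividing by $C:=\sum_{j=0}^N c_j>0$ yields
\[
x=\sum_{i=0}^N t_i\, a_i,\qquad t_i:=\frac{c_i}{C}>0,\qquad \sum_{i=0}^N t_i=1,
\]
exhibiting $x$ as a point of the convex hull of $\{a_0,\dots,a_N\}$, which is the claim.

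I do not expect any real obstacle here. The only points requiring a little care are the behaviour of $\zeta$ at the poles (dispatched by distinctness of the poles) and the strict positivity of the weights (built into the ansatz). As a free byproduct, since every $t_i$ is strictly positive one actually obtains the stronger statement that the locations lie in the relative interior of the convex hull, reaching its boundary faces only in degenerate limits such as a weight tending to infinity (as recalled above); this is not needed for the proposition but may be worth recording as a remark.
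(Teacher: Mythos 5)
Your proof is correct and is essentially identical to the paper's own argument: both rearrange $\zeta(x)=0$ into $x=\sum_i\big(\lambda_i^2/(|x-a_i|^2\rho(0))\big)a_i$, a convex combination of the poles (the paper also notes this follows from the Generalised Lucas Theorem, but gives the same direct calculation). Your added observations --- that zeros cannot occur at the poles, and that strict positivity of the coefficients places the locations in the relative interior of the hull --- are correct refinements the paper leaves implicit.
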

\begin{proof}
    This follows immediately from the Generalised Lucas Theorem \cite{Diaz1977}, but it is also instructive for later purposes to see by direct calculation. Indeed, notice that for $x\in\R^3\setminus\{a_0,\dots,a_N\}$, we have $\zeta(x)=0$ if and only if
    \begin{align}\label{crit-point-formula}
        x=\sum_{i=0}^N\left(\frac{\lambda_i^2}{|x-a_i|^2\rho(0)}\right)a_i,
    \end{align}
    where $\rho$ is as in \eqref{JNR-sky-functions}. This is in the convex hull of $a_i$ as $\frac{\lambda_i^2}{|x-a_i|^2\rho(0)}>0$, and
    \begin{align*}
        \sum_{i=0}^N\left(\frac{\lambda_i^2}{|x-a_i|^2\rho(0)}\right)=\frac{\sum_{i=0}^N\frac{\lambda_i^2}{|x-a_i|^2}}{\sum_{i=0}^N\frac{\lambda_i^2}{|x-a_i|^2}}=1.
    \end{align*}
\end{proof}

To analyse the zeros of $\zeta$ in more detail, it is useful to observe
\begin{align}\label{crit-f-zeta}
\zeta = \nabla f \quad \text{where} \quad f(x) = \sum_i \lambda_i^2 \log(\abs{x-a_i}).
\end{align}
As such we can identify points where $\zeta=0$ with critical points of $f$. In the case of equal weights, critical points of $f$ were studied in detail in \cite{Bauer2017}: they allowed $f:\R^n\setminus\{a_i\}\lto\R$ but we are only interested in the case $n=3$. We summarise some key results relevant in our context here.
\begin{itemize}
    \item \cite[Theorem 5.4]{Bauer2017}. The critical points of $f$ are isolated. Moreover, at every critical point of $f$, the Hessian $Hf$ has positivity index at least $2$; in particular this means away from the poles $a_i$, $f$ has only local minima (negativity index $0$) or saddle points (negativity index $1$).
    \item The above and standard Morse theory applied to the function $F=\prod_i|x-a_i|^{\lambda_i^2}$ gives \cite[Theorem 1.3]{Bauer2017}: when $f=\log F$ is a (local) Morse function, there are exactly $N+2h$ critical points, where $h$ are local minima and $N+h$ are saddle points.
    \item \cite[Proposition 6.1]{Bauer2017}. If the poles $a_i$ are a union of $G$-orbits, where $G$ is a finite subgroup of $\O(3)$ with $3$-dimensional irreducible representation, then $0$ is a critical point of $f$, and it is a local minimum when the $a_i$ are non-planar.
    \item There can be arbitrary local minima \cite[Proposition 9.1]{Bauer2017}. For every $h\geq2$, there exists a choice of $3h$ poles in $\R^3$ such that $f$ has $h$ local minima; the choice constructed assigns the poles to specific points on an equilateral triangular prism.
\end{itemize}

\subsection{Signs of locations}\label{sec-sign-loc}
We now want to link the degree of $U$ and the location of the corresponding skyrmions using the JNR ansatz. Each of these locations is equipped with a signed multiplicity, and this information is encapsulated in the local degree $\deg(\left.U\right|_S)$ where $S$ is a small sphere around a given location. For the JNR Skyrme field \eqref{JNR-sky}-\eqref{JNR-sky-functions} with pure imaginary poles this is the degree $\deg(\left.\wh{\iota}(\mu)\right|_S)$ as a map $S^2\lto S^2$, where $\iota(\mu)$ is the imaginary part of the defining function $\psi(\mu)$ for the rational JNR Skyrme fields \eqref{JNR-sky-functions}.

\begin{example}
    Let's look at this in the simple case of $N=1$ with $a_0 = q = -a_1$ a unit quaternion, and equal weights (set to unity). This gives rise to a single location at $x=0$ (see \cite{sutcliffe2024jnr} and \S\ref{sec: N=1}). Near $x=0$ we find 
    \begin{align*}
        \iota(\mu) &\sim \frac{1}{\mu^3} \frac{8 qx\overline{q}}{\abs{x-q}^2 \abs{x+q}^2} .
    \end{align*}
    This conjugation by $q$ corresponds to a rotation of $x$, and so clearly the map $x \mapsto \iota(\mu)$ is orientation-preserving, and thus the local degree is $+1$ as expected. 
\end{example}

Recall the identification of locations with critical points of $f(x) = \sum_i \lambda_i^2 \log(\abs{x-a_i})$. The results of \cite{Bauer2017} summarised above, in particular that the total number of critical points is $N+2h$ where $h$ is the number of local minima, suggests a relationship between the signs of the locations and the negativity index of the critical points. This relationship is established by the following main result of this section.

\begin{theorem}\label{thm:signs-locations-via-f}
Suppose $x_\ast$ is a non-degenerate critical point of $f$, and let $S$ be the small sphere centred at $x_\ast$ as in Lemma \ref{lem:local-degree}. Then ${\rm sgn}\deg(\left. U \right \rvert_S) = - \operatorname{sgn}\left(\det(Hf({x_\ast}))\right)$.
\end{theorem}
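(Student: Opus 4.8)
The plan is to reduce the statement to a local computation of the degree of $\wh{q_v}|_S$, using Lemma \ref{lem:local-degree}, and then to identify that degree with the sign of the Jacobian of the linearisation of $q_v$ at $x_\ast$. For the rational JNR fields the relevant imaginary part is $\iota(\mu)$, but since the locations and their signs are $\mu$-independent it is cleaner to work directly with $\zeta = \nabla f$, which is the $\mu\to\infty$ (or more precisely the controlling) piece: I would first argue that near a location $x_\ast$ the imaginary part $q_v$ of the defining quaternion is, to leading order, a positive scalar multiple of $\zeta(x)$, composed with a fixed rotation (conjugation by a unit quaternion) as in the $N=1$ example. Conjugation by a unit quaternion is orientation-preserving on $\R^3$, and multiplication by a positive scalar field does not change the local degree, so $\deg(\wh{q_v}|_S) = \deg(\wh{\zeta}|_S)$ for $S$ a sufficiently small sphere around $x_\ast$.

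Next I would compute $\deg(\wh{\zeta}|_S)$ by linearisation. Since $x_\ast$ is a non-degenerate critical point of $f$, we have $\zeta(x_\ast) = 0$ and $D\zeta(x_\ast) = Hf(x_\ast)$ is an invertible symmetric matrix. On a sufficiently small sphere $S = \partial B_\epsilon(x_\ast)$, the map $\zeta$ is homotopic (through nonvanishing maps) to its linearisation $x \mapsto Hf(x_\ast)(x - x_\ast)$, because the remainder is $o(\epsilon)$ while $|Hf(x_\ast)(x-x_\ast)| \gtrsim \epsilon$ on $S$; hence $\deg(\wh{\zeta}|_S)$ equals the degree of $v \mapsto \widehat{Hf(x_\ast)v}$ on the unit sphere. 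For an invertible linear map $A$ on $\R^3$, the induced map on $S^2$ has degree $\operatorname{sgn}\det A$: this is standard (e.g. $\mathrm{GL}^+(3)$ is connected so any $A$ with $\det A>0$ is homotopic to the identity, and any $A$ with $\det A<0$ to a reflection). Therefore $\deg(\wh{\zeta}|_S) = \operatorname{sgn}\det Hf(x_\ast)$.

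Finally I would reconcile the sign. Lemma \ref{lem:local-degree} gives $\deg(U|_{S}) = \deg(\wh{q_v}|_S)$, and the argument above gives $\deg(\wh{q_v}|_S) = \operatorname{sgn}\det Hf(x_\ast)$ — but the theorem asserts the opposite sign. The resolution is the squaring in $U = (q/|q|)^2$: the identification in Lemma \ref{lem:local-degree} is really $U|_S \simeq -1 + 2\delta_m \wh{q_v}|_S$, and one must check carefully the orientation conventions on the target $S^2 = \partial W$ around $-1$ versus the standard orientation used when computing $\deg(\wh{q_v})$. Equivalently, tracking the $N=1$ example: there $\det Hf(0) < 0$ (a saddle-type critical point, negativity index $1$ forced by the Bauer positivity-index-$\geq 2$ bound being vacuous in the minimum case — here $h=0$ so $x_\ast$ is a saddle) yet the local degree is $+1$, pinning down the sign as $-\operatorname{sgn}\det Hf(x_\ast)$. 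I would present this sign bookkeeping via the $N=1$ example as the normalisation, rather than belabouring the orientation of $\partial W$.

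The main obstacle I anticipate is the first step: rigorously justifying that $q_v$ near $x_\ast$ is a positive multiple of a rotated copy of $\zeta$, uniformly enough to get a nonvanishing homotopy on a small sphere. Because $\psi(\mu) = |\zeta|^2\rho(\mu) + \iota(\mu)$ and $\zeta(x_\ast)=0$, both $|\zeta|^2\rho(\mu)$ and the leading behaviour of $\iota(\mu)$ near $x_\ast$ vanish to the appropriate order, and one needs the precise leading term — generalising the $N=1$ expansion $\iota(\mu) \sim \mu^{-3}\, 8\, q x \overline{q}/(\cdots)$ — to see it is conjugation-of-$\zeta$ by a unit quaternion times a positive function. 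Establishing this cleanly in general (rather than just asserting it by analogy with $N=1$) is the real content; once it is in hand, the homotopy-to-linearisation and the $\operatorname{sgn}\det$ computations are routine.
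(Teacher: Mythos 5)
There is a genuine gap, and it sits exactly where you anticipated: the claimed leading-order relation between $q_v=\iota(\mu)$ and $\zeta$ is not only unproven but false as stated. Writing $\Psi(\mu)=\rho\zeta+\Xi(\mu)$ one finds $\iota=\Im(\Xi\overline{\zeta})=-\left(\Re(\Xi)\,\id_3+X\right)\zeta$, where $X$ is the antisymmetric matrix built from $\Im(\Xi)$; the determinant of this linear factor is $-\Re(\Xi)\left(\Re(\Xi)^2+|\Im(\Xi)|^2\right)$, which is \emph{negative} when $\Re(\Xi)>0$. So near $x_\ast$ the map $\iota$ is an orientation-\emph{reversing} linear image of $\zeta$, not ``a positive scalar times a rotation'' of it --- this is already visible in the $N=1$ example, where $\deg(\wh{\iota}|_S)=+1$ but $\deg(\wh{\zeta}|_S)=\sgn\det Hf(0)=-1$, so the two unit maps have different degrees on the small sphere and no positive-function-times-rotation relation can hold. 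The unexplained sign flip you run into at the end is precisely this, and it cannot be repaired by re-examining the orientation of $\partial W$ or the squaring in $U=(q/|q|)^2$: Lemma \ref{lem:local-degree} has no hidden sign. Nor can the sign be ``pinned down'' by the $N=1$ example, because nothing in your argument shows that the sign relating $\deg(\wh{\iota}|_S)$ to $\sgn\det Hf(x_\ast)$ is the same for every configuration.

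That configuration-independence is the real content of the theorem and is what your proposal lacks a route to: one must prove $\Re(\Xi)(x_\ast)>0$ at every critical point, for all poles, weights and $\mu>0$. In the paper this is the identity $\Re(\Xi)|_{x_\ast}=\mu S(x_\ast)$ together with Lemma \ref{lem:sign-sig}, whose positivity statement is a nontrivial inequality occupying all of Appendix \ref{app:conj-sign}; it uses $\zeta(x_\ast)=0$ essentially (via $\sum_i v_i=0$) to exhibit $S$ as a manifestly positive sum minus a nonpositive quantity. Your remaining steps --- linearising $\zeta=\nabla f$ at a non-degenerate critical point to get $\deg(\wh{\zeta}|_S)=\sgn\det Hf(x_\ast)$, and reducing to $\wh{\iota}$ via Lemma \ref{lem:local-degree} --- are correct and coincide with the paper's route, but without the positivity of $\Re(\Xi)$ the chain $\sgn\det D_{x_\ast}\iota=-\sgn\!\left(\Re(\Xi)\right)\sgn\det Hf(x_\ast)$ does not close, and the theorem is not proved.
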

In order to prove this, we will require the following, somewhat technical, Lemma.
\begin{lemma}\label{lem:sign-sig}
    Let $N\geq1$, $\{a_0,a_1,\dots,a_N\}\in\R^3$ be distinct, $\lambda_i>0$ and $\mu>0$. Let $x_\ast\in\R^3\setminus\{a_i\}$ be such that $\zeta(x_\ast)=0$. Then
    \begin{align}
    S&=\sum_{i,j = 0}^N \frac{\lambda_i^2 \lambda_j^2 \ip{a_j - a_i,a_j-x_\ast}}{(\abs{a_i-x_\ast}^2 + \mu^2) (\abs{a_j-x_\ast}^2 + \mu^2) \abs{a_j-x_\ast}^2}>0.
\end{align}
\end{lemma}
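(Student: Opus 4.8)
The plan is to reduce the whole estimate to the single vector identity that the hypothesis provides. First I would translate so that $x_\ast$ sits at the origin: set $b_i := a_i - x_\ast$ and $t_i := \abs{b_i}^2$, which are strictly positive because $x_\ast \notin \{a_i\}$. Then the numerator of the $(i,j)$ term of $S$ becomes $\lambda_i^2\lambda_j^2\ip{b_j - b_i, b_j} = \lambda_i^2\lambda_j^2\left(t_j - \ip{b_i,b_j}\right)$, while the hypothesis $\zeta(x_\ast)=0$ — equivalently $\nabla f(x_\ast)=0$ — reads
\[
\sum_{i=0}^N \frac{\lambda_i^2\, b_i}{t_i} = 0 .
\]

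Next I would split $S$ into the piece coming from $t_j$ and the piece coming from $-\ip{b_i,b_j}$. In the first piece the factor $t_j$ cancels and the double sum factorises as a perfect square: with $A := \sum_{i} \lambda_i^2/(t_i+\mu^2) > 0$ it equals $A^2$. The second piece factorises as one real inner product: introducing
\[
v := \sum_{i=0}^N \frac{\lambda_i^2\, b_i}{t_i+\mu^2}, \qquad w := \sum_{i=0}^N \frac{\lambda_i^2\, b_i}{(t_i+\mu^2)\,t_i},
\]
bilinearity gives $\sum_{i,j}\frac{\lambda_i^2\lambda_j^2\ip{b_i,b_j}}{(t_i+\mu^2)(t_j+\mu^2)t_j} = \ip{v,w}$. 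Hence $S = A^2 - \ip{v,w}$.

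The crucial step is then the elementary partial-fraction identity $\frac{1}{(t_i+\mu^2)t_i} = \frac{1}{\mu^2}\left(\frac{1}{t_i} - \frac{1}{t_i+\mu^2}\right)$, which rewrites $w = \frac{1}{\mu^2}\left(\sum_i \lambda_i^2 b_i/t_i - v\right) = -v/\mu^2$, the first sum vanishing by the critical-point identity above. Therefore
\[
S = A^2 + \frac{\abs{v}^2}{\mu^2} \ge A^2 > 0 ,
\]
which is even slightly stronger than what is claimed.

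I do not anticipate a genuine obstacle here: the only point needing a spark of insight is to resist expanding the inner products $\ip{b_i,b_j}$ geometrically and instead regroup the double sum so that the cross term collapses to $\ip{v,w}$; after that, the partial-fraction rewriting and the hypothesis finish the argument immediately. (A brute-force alternative — symmetrising in $i\leftrightarrow j$ and trying to bound $\abs{a_i-a_j}^2(t_i+t_j)-(t_i-t_j)^2$ term by term — does not close, forcing one back to the critical-point condition anyway, so the factorised route above seems the clean one.)
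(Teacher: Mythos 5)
Your proof is correct, and it is substantially cleaner than the one in the paper. Both arguments begin with the same decomposition: writing $b_i=a_i-x_\ast$, $t_i=\abs{b_i}^2$, the diagonal-in-structure piece collapses to $A^2=\bigl(\sum_i\lambda_i^2/(t_i+\mu^2)\bigr)^2$ and the remaining cross term is exactly the quantity the paper calls $q$ (in its notation $\tfrac{\lambda_i^4}{|v_i|^2}=t_i$, so its $p$ and $q$ coincide with your $A^2$ and $\ip{v,w}$). The difference is in how the cross term is handled. The paper eliminates $v_0$ via $\sum_i v_i=0$, regards the resulting expression as a quadratic in $\nu=\lambda_0^4$, and completes the square coefficient by coefficient to exhibit $q$ as a nonpositive multiple of a norm squared --- a correct but lengthy computation. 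Your partial-fraction identity $\tfrac{1}{t(t+\mu^2)}=\tfrac{1}{\mu^2}\bigl(\tfrac1t-\tfrac1{t+\mu^2}\bigr)$, combined with the critical-point condition $\sum_j\lambda_j^2 b_j/t_j=0$, gives $w=-v/\mu^2$ immediately and hence the exact closed form $S=A^2+\abs{v}^2/\mu^2$, which is both shorter and strictly sharper (it gives an explicit positive lower bound $A^2$). It also isolates precisely where the hypothesis $\zeta(x_\ast)=0$ enters, which is relevant to the conjecture in the paper's closing remark about dropping that constraint: without it one gets $S=A^2-\ip{v,w}$ with $w=\tfrac{1}{\mu^2}\bigl(\sum_j\lambda_j^2b_j/t_j-v\bigr)$, and the sign is no longer automatic. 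I find no gap in your argument.
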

\begin{proof}
    See Appendix \ref{app:conj-sign}
\end{proof}
We now prove the main result
\begin{proof}[Proof of Theorem \ref{thm:signs-locations-via-f}]
We shall show that $\operatorname{sgn}\det(Hf({x_\ast})) =-\operatorname{sgn}\det(D_{x_\ast} \iota)$. This gives us the sign of the local degree ${\rm sgn}\deg(\left. U \right \rvert_S)$ for non-degenerate points as by Lemma \ref{lem:local-degree} this is determined by whether $\hat{\iota}$ is locally orientation preserving / reversing at $x_\ast$. 

In the derivation by Sutcliffe \cite{sutcliffe2024jnr}, we see that $\iota(\mu)=\Im(\Psi(\mu)\ol{\zeta})$, where
\begin{align*}
    \Psi(\mu) &= \sum_{i,j = 0}^N \frac{\lambda_i^2 \lambda_j^2 (x - a_i + \mu)\overline{(x - a_j + \mu)}(x - a_j)}{\abs{x  -a_i + \mu}^2 \abs{x - a_j + \mu}^2 \abs{x - a_j}^2}\\
    &=\rho\zeta+\sum_{i,j = 0}^N \frac{\lambda_i^2 \lambda_j^2 (a_j - a_i)\overline{(x - a_j + \mu)}(x - a_j)}{\abs{x  -a_i + \mu}^2 \abs{x - a_j + \mu}^2 \abs{x - a_j}^2}\\
    &=\rho\zeta+\sum_{i,j = 0}^N \frac{\mu\lambda_i^2 \lambda_j^2 (a_j - a_i)(x - a_j)}{\abs{x  -a_i + \mu}^2 \abs{x - a_j + \mu}^2 \abs{x - a_j}^2}=:\rho\zeta+\Xi(\mu);
\end{align*}
here in the last line we expanded out the middle bracket and observed the first term vanishes by anti-symmetry in $i\leftrightarrow j$. In the case where the poles are pure imaginary, then $\zeta$ is pure imaginary. Let's think of $\iota$, $\Im(\Xi)$, and $\zeta$ as $3$-vectors. Then
\begin{align}
\iota=-\Re(\Xi)\zeta-\Im(\Xi)\times\zeta,\quad|\iota|^2=|\zeta|^2|\Xi|^2.
\end{align}
For $x_\ast\in\zeta^{-1}(0)$, since $\zeta=\nabla f$, we obtain the differential $D_{x_\ast}\iota:T_{x_\ast}\R^3\lto T_{\iota(x_\ast)}\R^3$ as
\begin{align*}
    D_{x_\ast}\iota=-Hf(x_\ast)\left(\Re(\Xi)\id_3+X\right)|_{x=x_\ast},
\end{align*}
where $X$ is the $3\times 3$ anti-symmetric matrix
\begin{align*}
    X=\begin{pmatrix}
        0&\Im(\Xi)_3&-\Im(\Xi)_2\\
        -\Im(\Xi)_3&0&\Im(\Xi)_1\\
        \Im(\Xi)_2&-\Im(\Xi)_1&0
    \end{pmatrix}.
\end{align*}
As such
\begin{align}\label{detDiota}
    \det(D_{x_\ast}\iota)=-\det(Hf(x_\ast))\left(\Re(\Xi)\left(\Re(\Xi)^2+|\Im(\Xi)|^2\right)\right)|_{x=x_\ast}.
\end{align}
Now, again viewing $a_i$ and $x_\ast$ as $3$-vectors, we have
\begin{align}
    \Re(\Xi)|_{x=x_\ast}&=\sum_{i,j = 0}^N \frac{\mu\lambda_i^2 \lambda_j^2 \ip{a_j - a_i,a_j-x_\ast}}{(\abs{x_\ast  -a_i}^2 + \mu^2)(\abs{x_\ast - a_j}^2 + \mu^2) \abs{x_\ast - a_j}^2}=\mu S(x_\ast).\label{Re(Xi)}
\end{align}
This is always positive by Lemma \ref{lem:sign-sig} and since $\mu>0$. The result thus follows by considering the sign of the formula \eqref{detDiota}.
\end{proof}

As a result of Theorem \ref{thm:signs-locations-via-f} we can thus use the existing literature about the critical points of $f$ as discussed in the previous section. In particular see that (non-degenerate) local minima give $-1$ sign and saddle points give $+1$. These are the only two configurations that arise in the case of equal weights \cite{Bauer2017}. Also by \cite[Proposition 6.1]{Bauer2017}, the JNR skyrmions of equal weights where the poles have the symmetry group of a regular solid (tetrahedral, octahedral, or icosahedral) must admit a negatively signed location at the origin. In the most symmetric cases, we can generate an $N=3$ tetrahedron, $N=5$ octahedron, $N=7$ cube, $N=11$ icosahedron, and an $N=19$ dodecahedron, using the JNR ansatz by placing the poles at the vertices of these Platonic solids, and these are all guaranteed to have a negatively signed location at the origin. The first approximates a minimal-energy skyrmion (which we discuss in detail in \S\ref{sec: tetrahedral 3-skyrmion}), whereas the latter are expected to be saddle points of the Skyrme energy. This result supports early analysis of Houghton--Krusch in \cite{houghtonkrusch2001folding}, where they predicted existence of negative baryon density in some of these examples using non-holomorphic rational maps.

%%%%%%%%%%%%%%%%%%%%%%%%%%%%%%%%%%%%%%%%
%%%%%%%%%%%%%%%%%%%%%%%%%%%%%%%%%%%%%%%%
\section{Examples}\label{sec: examples}

%%%%%%%%%%%%%%%%%%%%%%%%%%%%%%%%%%%%%%%%
\subsection{\texorpdfstring{$N=1$}{N=1}}\label{sec: N=1}

We shall not say much about this, only that this case was considered by Sutcliffe when the weights were equal, and that indeed by an overall translation we can choose $a_0 = q = -a_1$ for some quaternion $q$, and then there is a single location at $x=0$ with multiplicity $+1$. 

%%%%%%%%%%%%%%%%%%%%%%%%%%%%%%%%%%%%%%%%
\subsection{\texorpdfstring{$N=2$}{N=2}}\label{sec: N=2}

In the case of $N=2$ it is known that the JNR data describes the full $N=2$ instanton moduli space. In fact, the weights and poles do not uniquely determine an instanton: gauge-equivalent JNR are related by Poncelet's porism. In particular, let $T$ be the triangle with vertices $a_i$, $C$ the circumcircle of $T$, and $D$ the inellipse tangent at points $b_i$ along the edges of $T$ such that 
\[
\frac{\lambda_0^2}{\lambda_1^2} = \frac{\abs{a_0 - b_2}}{\abs{b_2 - a_1}}, \quad \text{etc.}
\]
Poncelet's porism identifies $T$ as one triangle in a 1-parameter family of triangles who are circumscribed by $C$ and have $D$ inscribed, and all such triangles give rise to gauge-equivalent JNR data \cite{hartshorne1978stable}. Gauge equivalent instantons in turn give rise to skyrmions equivalent modulo isorotations \cite{AtiyahManton1993}.\footnote{Note the difference between our notation and that of Atiyah \& Manton for the weights: where we use $\lambda_i^2$, they use $\lambda_i$.} Restricting attention again to pure imaginary poles, we can identify the locations of the $N=2$ JNR skyrmions through the geometry of the porism via the following theorem. 

\begin{theorem}\label{thm: 2-skymrion locations are foci}
     The locations of the 2 skyrmions arising from the $N=2$ JNR data $a_0,a_1,a_2\in\R^3$, $\lambda_0,\lambda_1,\lambda_2>0$ are the two foci of $D$, counted with multiplicity $+1$. 
\end{theorem}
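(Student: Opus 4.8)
The plan is to reduce the claim to a statement purely about the zeros of the rational vector field $\zeta$ when $N=2$, and then to identify those zeros with the foci of the inellipse $D$ using the Poncelet-porism structure. First I would note that, by the discussion preceding the theorem, the locations are exactly the zeros of $\zeta(x)=\sum_{i=0}^{2}\lambda_i^2(x-a_i)/|x-a_i|^2$, equivalently the critical points of $f(x)=\sum_i\lambda_i^2\log|x-a_i|$. Since the three poles $a_0,a_1,a_2$ span (generically) a plane $\Pi$ in $\R^3$, I would first argue that all critical points of $f$ lie in $\Pi$: writing $x=x_\parallel+x_\perp$ with $x_\perp$ the component orthogonal to $\Pi$, the $\perp$-component of $\zeta$ is $x_\perp\sum_i\lambda_i^2/|x-a_i|^2$, which vanishes only when $x_\perp=0$. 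So the problem becomes two-dimensional, and I can work with $f$ restricted to $\Pi\cong\R^2$, i.e. with a planar logarithmic potential with three positive charges.

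Next I would count: by Bauer's result (quoted above, $N+2h$ critical points with $h$ minima) together with Theorem \ref{thm:signs-locations-via-f}, a planar configuration with three poles has $2+2h$ critical points, $h$ of them minima (sign $-1$) and $2+h$ saddles (sign $+1$). To pin down $h=0$ I would invoke the planarity argument promised in Remark \ref{remark: N planar poles}: for poles in a plane the Hessian $Hf$ always has a nonnegative direction transverse to the plane (from the computation above, the $\perp\!\perp$ entry is $\sum_i\lambda_i^2/|x-a_i|^2>0$), so at any critical point the negativity index is at most $1$ — hence no local minima, $h=0$, and there are exactly $2$ critical points, both of sign $+1$. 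That already gives the multiplicity-$+1$ claim and reduces the geometric content to: \emph{the two critical points of the planar three-charge logarithmic potential are the foci of the inellipse $D$ determined by the charge ratios.}

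For that geometric core I would pass to complex coordinates $z$ in the plane $\Pi$, identifying $a_i$ with complex numbers and writing $\bar\zeta = \sum_i \lambda_i^2/(z-a_i) = 0$; clearing denominators this is the quadratic $\sum_i \lambda_i^2 (z-a_j)(z-a_k)=0$ (indices $\{i,j,k\}=\{0,1,2\}$), whose two roots are the sought locations. The key classical fact to bring in is the Siebeck–Marden theorem: the roots of the derivative of a cubic with roots $a_0,a_1,a_2$ are the foci of the Steiner inellipse of the triangle $a_0a_1a_2$, and more generally (the weighted Siebeck theorem) the critical points of $\sum_i \lambda_i^2\log|z-a_i|$ are the foci of the conic inscribed in the triangle touching side $a_ja_k$ at the point dividing it in ratio $\lambda_j^2:\lambda_k^2$. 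I would then check that this touching-ratio is exactly the one defining $D$ in the theorem statement (matching $\lambda_0^2/\lambda_1^2 = |a_0-b_2|/|b_2-a_1|$), so that the Siebeck conic coincides with $D$. Finally, since $D$ was characterised via Poncelet's porism, I would remark that the porism is consistent with this — all Poncelet-equivalent triangles share the same inellipse, hence the same foci, matching the fact that gauge-equivalent JNR data give the same skyrmion locations.

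The main obstacle I expect is the geometric identification step — establishing (or citing in exactly the right form) the weighted Siebeck–Marden theorem and verifying that the inscribed conic it produces is precisely Sutcliffe's inellipse $D$ with the stated tangency ratios, rather than, say, a differently normalised conic. The two-dimensional reduction and the sign/count bookkeeping are routine given the results already in the paper; the honest work is in the classical projective-geometry lemma and in matching conventions for the weights (the footnote warning about $\lambda_i$ versus $\lambda_i^2$ is a hint that this matching needs care).
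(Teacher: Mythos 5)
Your proposal is correct and follows essentially the same route as the paper's proof: reduce to the plane containing the three poles by noting that the transverse component of $\zeta$ cannot vanish off that plane, pass to a complex coordinate so the locations become the zeros of $\sum_i\lambda_i^2/(z-z_i)$, and invoke the Siebeck--Marden theorem to identify those zeros with the foci of the inellipse $D$ with the stated tangency ratios. Your extra bookkeeping for the multiplicity $+1$ via Bauer's count and the positivity of the transverse Hessian entry is a sound way of making explicit what the paper leaves implicit there (and defers to Remark \ref{remark: N planar poles}).
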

Before starting the proof we should note that this result is very natural. As the location is invariant under isorotations, and hence is a gauge-invariant quantity from the JNR perspective, the locations of a skyrmion arising from $N=2$ JNR data must be two points associated to the entire porism, of which the foci are the natural candidates. 
\begin{proof}[Proof of Theorem \ref{thm: 2-skymrion locations are foci}]
    Observe that when we have $N=2$ JNR data there is always a plane through the $a_i$, and so by an overall rotation we can choose these to be of the form $a_i = a_{i1}\bi + a_{i2}\bj+h\bk$. Then writing $x = x_1 \bi + x_2 \bj + x_3 \bk$ we immediately see the $\bk$ component of $\zeta$ is 
    \[
    \sum_{i=0}^N \frac{\lambda_i^2 (x_3-h)}{\abs{x - a_i}^2}, 
    \]
    which is only zero when $x_3=h$. As such when looking for solutions of $\zeta=0$ it suffices to consider the case $h=0$ and restrict to considering $x = x_1 \bi + x_2 \bj $. Let's then factor out an $\bi$ from the $a_i$ and $x$, i.e. write 
    \[
    a_i = \bi(a_{i1} - a_{i2}\bk), \quad x = \bi(x_1 - x_2 \bk).
    \]
    Defining then $z = x_1 + x_2 \bk$, $z_i = a_{i1} + a_{i2}\bk$ we can rewrite 
    \begin{align*}
        \zeta &= \sum_{i=0}^2 \frac{\lambda_i^2 \bi (\overline{z} - \overline{z_i})}{\abs{\bi(\overline{z} - \overline{z_i})}^2} = \bi \sum_{i=0}^2 \frac{\lambda_i^2}{z - z_i}.
    \end{align*}
    The result then reduces down to a theorem of Siebeck (often called Marden's theorem), of which we shall reproduce the statement from \cite{Marden1945}:
    \begin{displayquote}
        The zeros of the partial fraction 
        \[
        Q(z) = \frac{m_0}{z - z_0} + \frac{m_1}{z - z_1} + \frac{m_2}{z - z_2}, \quad m_0 m_1 m_2 \neq 0, 
        \]
        where $z_0$, $z_1$ and $z_2$ are three distinct, non-collinear points, lie at the foci of the conic which touches the line segments $(z_1, z_2)$, $(z_2, z_0)$ and $(z_0, z_1)$ in the points $\zeta_0$, $\zeta_1$ and $\zeta_2$ that divide these segments in the ratio $m_1 : m_2$, $m_2:m_0$, and $m_0:m_1$ respectively.  
    \end{displayquote}
    This clearly applies with $\zeta = Q(z)$ and $m_i = \lambda_i^2$ to give the desired result.
\end{proof}

\begin{example}[Equal weights]
    When all the $\lambda_i$ are equal we see that the tangency points of $D$ must be the midpoints of the edges of $T$. Such an inellipse has a special name, namely the \textbf{Steiner inellipse}. In this case it is also particularly easy to work out the locations of the foci. Observe that writing $m_0 = m_1 = m_2 = m$, then $Q(z) = m P^\prime(z)/P(z)$, where $P(z) = \prod_i (z - z_i)$, and as such the foci are given by the zeros of $P^\prime(z)$.   
\end{example}

\begin{remark}\label{remark: N planar poles}
    Theorem \ref{thm: 2-skymrion locations are foci} generalises to the situation of $N+1$ planar JNR poles of which no 3 are collinear to say that the locations are the foci (appropriately defined) of the corresponding degree-$N$ plane curve which is tangent to each of the $\frac{1}{2}N(N+1)$ line segments in the ratios $\lambda_i^2/\lambda_j^2$ ($i \neq j$). In the case that these $N$ points form the vertices of an equilateral $N$-gon and all the weights are equal the foci all coincide at the center of the $N$-gon. It is well-known that these configurations form tori, which have an axial symmetry, and so a single location is natural.
    
    The relation of Marden's theorem to the solution of such criticality problems has been previously noted in other contexts \cite{Kaiser1993, Aref1998}. 
\end{remark}

\begin{example}[Degenerate single location]\label{ex: degenerate single location}
When the inellipse $D$ is a circle, by Theorem \ref{thm: 2-skymrion locations are foci}, there is a single location at the centre of this circle. This location must be counted with multiplicity $+2$ as the total degree is $N=2$. In this situation the centre of $D$ lies on the intersections of the angle bisectors, and so 
    \[
    \abs{a_0 - b_2} = \abs{a_0 - b_1}, \quad \text{etc.}
    \]
    From this we determine the conditions 
    \[
\frac{\lambda_0^2 + \lambda_1^2}{\lambda_1^2 + \lambda_2^2} = \frac{\abs{a_0 - a_1}}{\abs{a_1 - a_2}}, \quad \text{etc.}
    \]
We can see that in the case of equal weights the corresponding triangle is an equilateral triangle; this case corresponds to the axially-symmetric $2$-skyrmion. But other solutions can occur. To see this we can rewrite these equations as 
\[
\begin{pmatrix} d_0 & (d_0 - d_2) & - d_2 \\ 
-d_0 & d_1 &  (d_1 - d_0)\\ 
(d_2 - d_1) & - d_1& d_2\end{pmatrix} \begin{pmatrix} \lambda_0^2 \\ \lambda_1^2 \\ \lambda _2^2 \end{pmatrix} = 0,
\]
where $d_0 = \abs{a_1 - a_2}$, etc. This 1-dimensional kernel is given by 
\begin{align}\label{weights-incircle}
\begin{pmatrix} \lambda_0^2 \\ \lambda_1^2 \\ \lambda _2^2 \end{pmatrix} \propto \begin{pmatrix}
    -d_0 + d_1 + d_2 \\ d_0 - d_1 + d_2 \\ d_0 + d_1 - d_2
\end{pmatrix}
\end{align}
provided $d_1 + d_2 \neq d_0$ etc., which is ensured by the triangle inequality provided the poles are not collinear. 

Some examples of these type of configurations were considered as centres of scattering paths in \cite{halcrowwinyard2021consistent}: an example is plotted in \cite[Figure 9]{halcrowwinyard2021consistent} which looks like a croissant (an offset ring). There may be interest in studying the moduli space of $2$-JNR skyrmions with a single location in a semi-classical quantisation, as this would provide a non-linear extension of the vibrational modes of the $2$-skyrmion \cite{GudnasonHalcrow2018vibrational}.
\end{example}

\begin{remark}
    In the event that the JNR poles are collinear the conics $C$ and $D$ both degenerate to the line containing the three poles, and then in the limit the foci lie on the line as well. These are then simply worked out by solving a quadratic in terms of a parameter along the line. 
\end{remark}

\subsection{\texorpdfstring{$N=3$}{N=3} tetrahedron}\label{sec: tetrahedral 3-skyrmion}

So far in \S\ref{sec: N=1} and \S\ref{sec: N=2} we have only seen locations with positive multiplicity, and so we seek an example which demonstrates the existence of negatively signed locations, that is the number of preimages of $U=-1$ exceeds $N$. Following the discussion at the end of \S\ref{sec-sign-loc}, we are drawn to consider a tetrahedrally symmetric $N=3$ JNR skyrmion. Other than the existence of such a location, this example is important for two main reasons. 

Firstly, in the context of Euclidean $\SU(2)$ BPS monopoles such a situation where the number of zeros of the Higgs field exceeds the topological charge, and so some must be counted with opposite multiplicity (often called `anti-zeros'), occurs for the charge-3 tetrahedral monopole \cite{sutcliffe1996zeros}. This configuration sees a single negatively signed preimage at the centre of the tetrahedron, and 4 positively signed preimages along the rays of tetrahedron. Given that a partial motivation for this work is studying further the analogy between skyrmions and monopoles, it is sensible to investigate whether the $N=3$ tetrahedrally-symmetric JNR skyrmion exhibits the same behaviour.

Secondly, the minimal-energy $N=3$ skyrmion is tetrahedrally-symmetric \cite{battye2002skyrmions}, and this is relatively well-approximated by the tetrahedral JNR skyrmion \cite{sutcliffe2024jnr}. Furthermore, negative baryon density in a skyrmion with positive baryon number was observed for a tetrahedral $N=3$ skyrmion in \cite{LeeseManton1994stable, Foster2013}, indicative of constituent anti-skyrmions.
 
A charge-$3$ instanton with tetrahedral symmetry may be generated from JNR data with equal weights (fixed as unity), and poles at the vertices of a tetrahedron in $\R^3$. A sensible orientation takes these poles as
\begin{align}
\begin{aligned}
    a_0&=\lambda(\bi+\bj+\bk),& a_1&=\lambda(\bi-\bj-\bk),\\
    a_2&=\lambda(-\bi+\bj-\bk),& a_3&=\lambda(-\bi-\bj+\bk),
\end{aligned}
\end{align}
with $\lambda>0$ a free parameter representing the scale. Writing $x=x_1\bi+x_2\bj+x_3\bk$, here we have 
\begin{align*}
    \zeta &= \frac{(x_1-\lambda)\bi + (x_2-\lambda)\bj + (x_3-\lambda)\bk}{(x_1-\lambda)^2 + (x_2-\lambda)^2 + (x_3-\lambda)^2} + \frac{(x_1-\lambda)\bi + (x_2+\lambda)\bj + (x_3+\lambda)\bk}{(x_1-\lambda)^2 + (x_2+\lambda)^2 + (x_3+\lambda)^2} \\
    &\phantom{=} + \frac{(x_1+\lambda)\bi + (x_2-\lambda)\bj + (x_3+\lambda)\bk}{(x_1+\lambda)^2 + (x_2-\lambda)^2 + (x_3+\lambda)^2} + \frac{(x_1+\lambda)\bi + (x_2+\lambda)\bj + (x_3-\lambda)\bk}{(x_1+\lambda)^2 + (x_2+\lambda)^2 + (x_3-\lambda)^2}.
\end{align*}
Clearly any solutions of $\zeta=0$ will scale with $\lambda$ and will have the same tetrahedral symmetry, so we can look for just one solution in an octant when $\lambda=1$. One finds that the only solutions are when  $x_1=x_2=x_3=\kappa$ and 
    \[
    \kappa(\kappa-1/3) = 0,
    \]
as well as the tetrahedral rotations of these solutions. Using Theorem \ref{thm:signs-locations-via-f}, we see that there are positively signed solutions near the vertices at $(\lambda/3, \lambda/3, \lambda/3)$ and its tetrahedral rotates, and a negatively signed solution at $0$, matching the situation for monopoles \cite{sutcliffe1996zeros}. The side length of this tetrahedron is then $l = \frac{2\sqrt{2}}{3}\lambda$. 

\begin{remark}
    This process of looking for solutions along the line $x_1=x_2=x_3$ is the same as that which was carried out numerically in \cite{HoughtonSutcliffe1996monopole}.
\end{remark}
It is appropriate to make a few comments on how we determined all the solutions. By clearing the denominator of $\zeta$, the coefficients of $\bi$, $\bj$, $\bk$ give three polynomials $f_1, f_2, f_3 \in \mathbb{R}[x_1,x_2,x_3]$, and we seek to use Gr\"obner bases to find the associated variety. Unfortunately as polynomials in $\mathbb{C}[x_1,x_2,x_3]$ the associated variety has dimension 1, coming from constituent polynomials which are sums of squares, and so have complex roots but not real roots. To find the locations we must exclude the cases of these positive dimensional loci, and just calculate real solutions. When we can write $f_1, f_2, f_3 \in \mathbb{k}[x_1,x_2,x_3]$ for some exact field $\mathbb{k}$, by using exact algebraic methods this procedure is guaranteed to find the locations as values algebraic over $\mathbb{k}$. 

\subsection{\texorpdfstring{$N=3$}{N=3} twisted scattering}\label{subsec: twisted line scattering}

An important family of skyrmions considered in the quantisation of the $3$-skyrmion \cite{Walet:1996he} is that of \emph{twisted line scattering}. Here we take the JNR poles to be 
\begin{align}
\begin{aligned}
    a_0&=\lambda(\bi+\bj+\gamma\bk),& a_1&=\lambda(\bi-\bj-\gamma\bk),\\
    a_2&=\lambda(-\bi+\bj-\gamma\bk),&a_3&=\lambda(-\bi-\bj+\gamma\bk),
\end{aligned}
\end{align}
for $\gamma \in \mathbb{R}$, and $\lambda>0$ an overall scale. For the purpose of studying the locations, the scale is unimportant, and furthermore we only need to consider $\gamma\in[0,\infty)$ as all configurations with $\gamma<0$ are related to those at $-\gamma>0$ by a $\tfrac{\pi}{2}$ rotation and isorotation around the $\bk$ axis. The configuration with $\gamma=0$ is an equilateral square in the plane, which we have already seen in Remark \ref{remark: N planar poles} corresponds to an axially-symmetric skyrmion with a triple location at the origin, and the configuration with $\gamma=1$ is the tetrahedral configuration of \S\ref{sec: tetrahedral 3-skyrmion}. This demonstrates that in this continuously varying configuration there is the possibility of the number of distinct locations changing. 

We investigate this numerically, and there are key regions to investigate. At each we give the number of (possibly repeated) positively and negatively signed locations. The critical values of $\gamma$ when the number of locations changes, so called `splitting points', correspond to when the locations are degenerate critical points of $f$. For example the location at $0$ is degenerate when $\gamma=0$ or $\gamma=\sqrt{2}$.
\begin{enumerate}
    \item ($\gamma=0$, [$+3, -0$], Figure \ref{subfig: g=0}). This is the plane equilateral square, a triply positive location at the origin. 
    \item ($\gamma \in (0, \gamma_0)$, $\gamma_0 \approx 7/8$, [$+4, -1$], Figure \ref{subfig: g=0.5}). There is a single negatively signed location at the origin, and in the $x_1,x_2 \geq 0$ quadrant a positively signed solution along the ray $x_1=x_2$ moving monotonically outwards and upwards. This positively signed location is mirrored in the three other quadrants.  
    \item ($\gamma \in (\gamma_0, \sqrt{2})$, [$+4, -1$], Figure \ref{subfig: g=1.0}). The single negative remains at the origin, but the positively signed locations along the rays have turned around and are moving monotonically inwards while still moving upwards.
    \item ($\gamma \in (\sqrt{2}, \sqrt{\sqrt{2}+1})$, [$+5, -2$], Figures \ref{subfig: g=1.42} and \ref{subfig: g=1.54}). The single negative location at the origin has split into one positive location and two negative locations. The positive location remains at the origin, and the two negative solutions move monotonically along the line $x_1=0=x_2$, one in the $+x_3$ direction and the other in the $-x_3$ direction. The other four positive locations along the rays remains inward and upward moving.
    \item ($\gamma \in (\sqrt{\sqrt{2}+1}, \infty)$, [$+3, -0$], Figure \ref{subfig: g=1.6}). The positive location at the origin remains. The positive locations moving inward in the $x_3 > 0$ region and the negative location moving upwards along the axis collide leaving a single positive location on the axis $x_1=0=x_3$, which moves off upwards. The mirror process happens in the $x_3 < 0$ region.
\end{enumerate}
This is the same schematic behaviour observed in the zeros of the Higgs field in twisted line scattering of Euclidean 3-monopoles \cite[Figure 3]{HoughtonSutcliffe1996monopole}. We visualise this in Figure \ref{fig: twisted line scattering plots}, designed to mirror \cite[Figure 3]{HoughtonSutcliffe1996monopole}.

\begin{figure}
    \centering
         \begin{subfigure}[c]{0.32\textwidth}
         \centering
         \includegraphics[width=\textwidth]{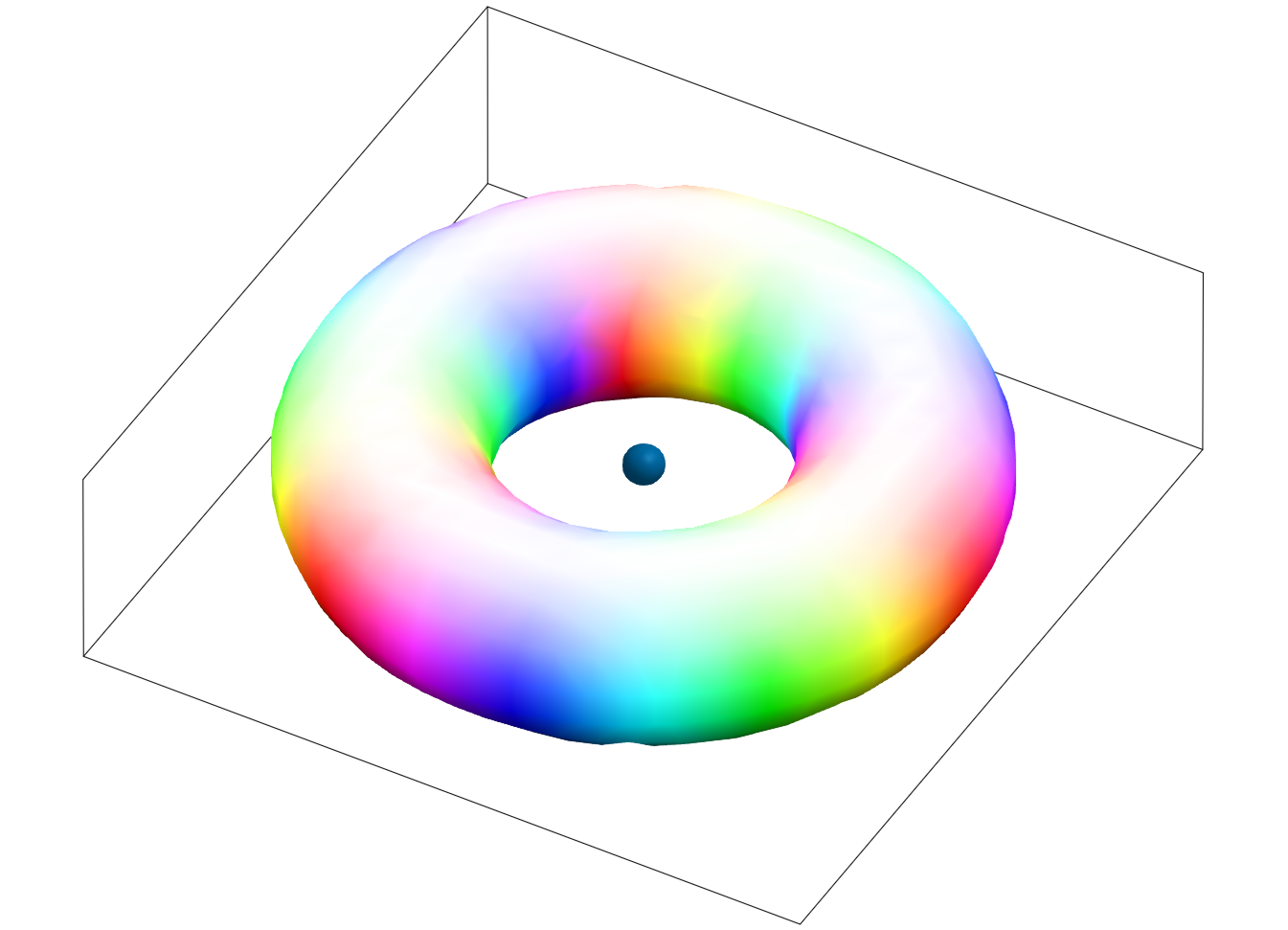}
         \caption{}
         \label{subfig: g=0}
     \end{subfigure}
          \begin{subfigure}[c]{0.32\textwidth}
         \centering
         \includegraphics[width=\textwidth]{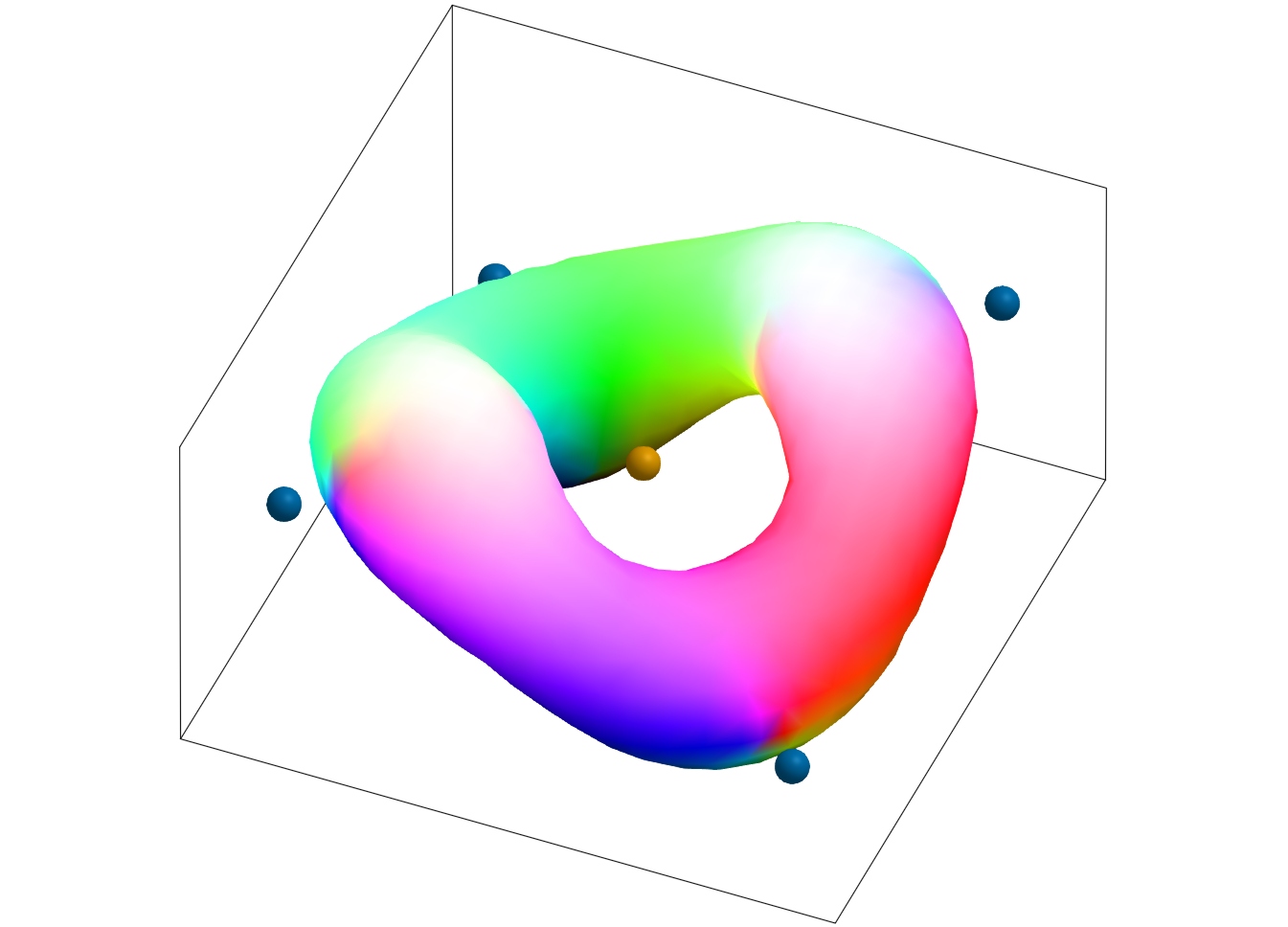}
         \caption{}
         \label{subfig: g=0.5}
     \end{subfigure}
          \begin{subfigure}[c]{0.32\textwidth}
         \centering
         \includegraphics[width=\textwidth]{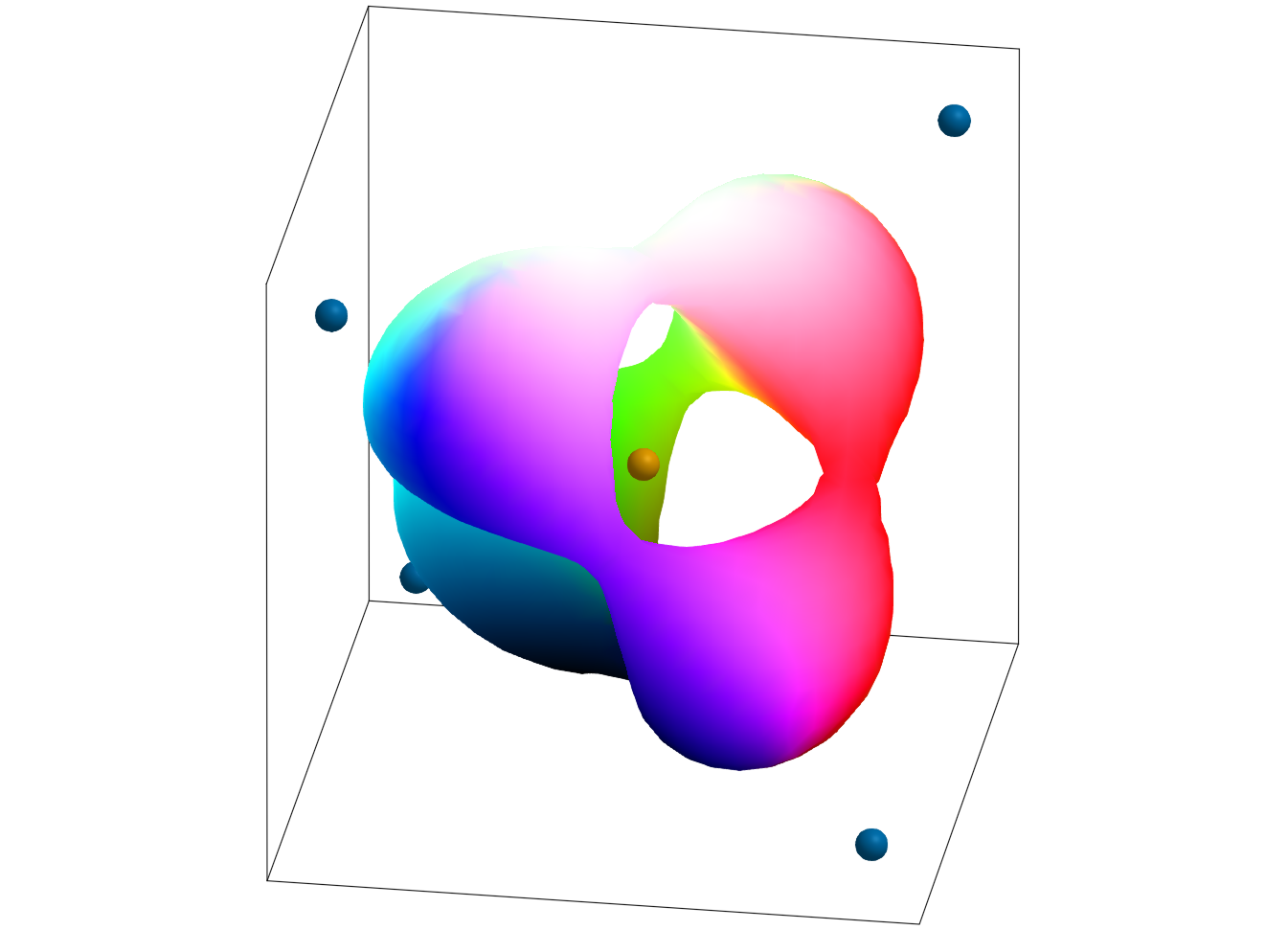}
         \caption{}
         \label{subfig: g=1.0}
     \end{subfigure}
          \begin{subfigure}[c]{0.32\textwidth}
         \centering
         \includegraphics[width=\textwidth]{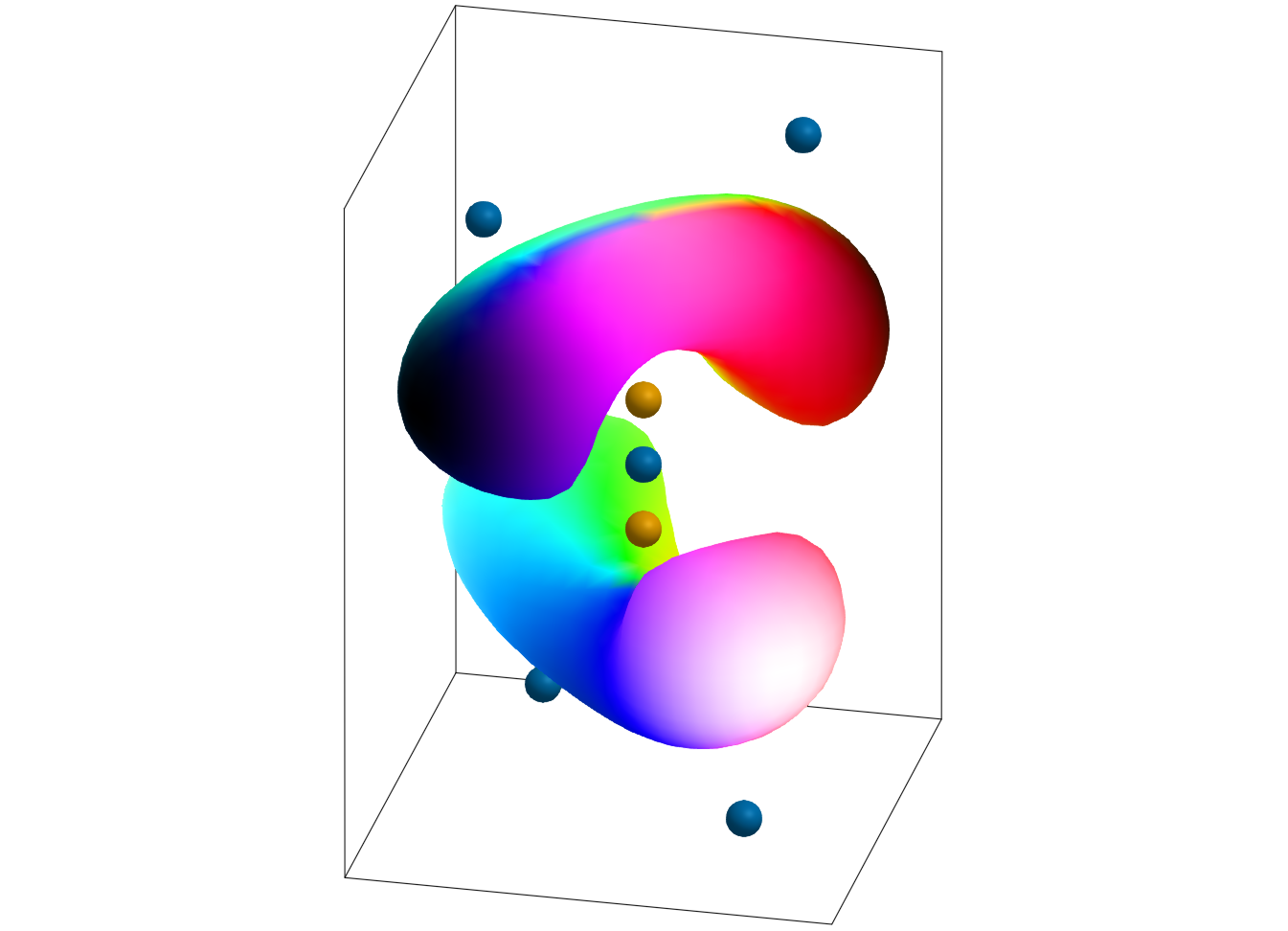}
         \caption{}
         \label{subfig: g=1.42}
     \end{subfigure}
          \begin{subfigure}[c]{0.32\textwidth}
         \centering
         \includegraphics[width=\textwidth]{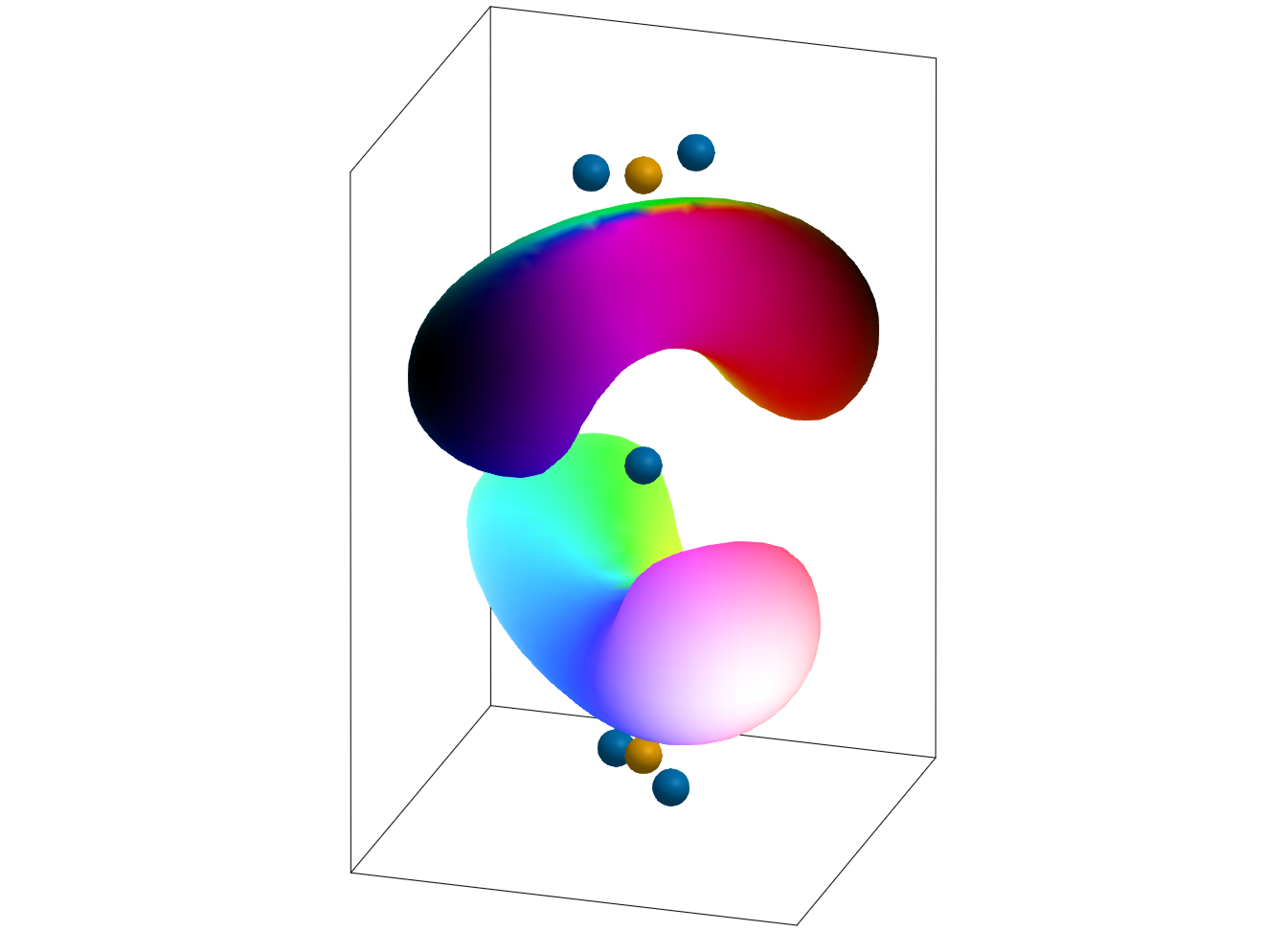}
         \caption{}
         \label{subfig: g=1.54}
     \end{subfigure}
     \begin{subfigure}[c]{0.32\textwidth}
         \centering
         \includegraphics[width=\textwidth]{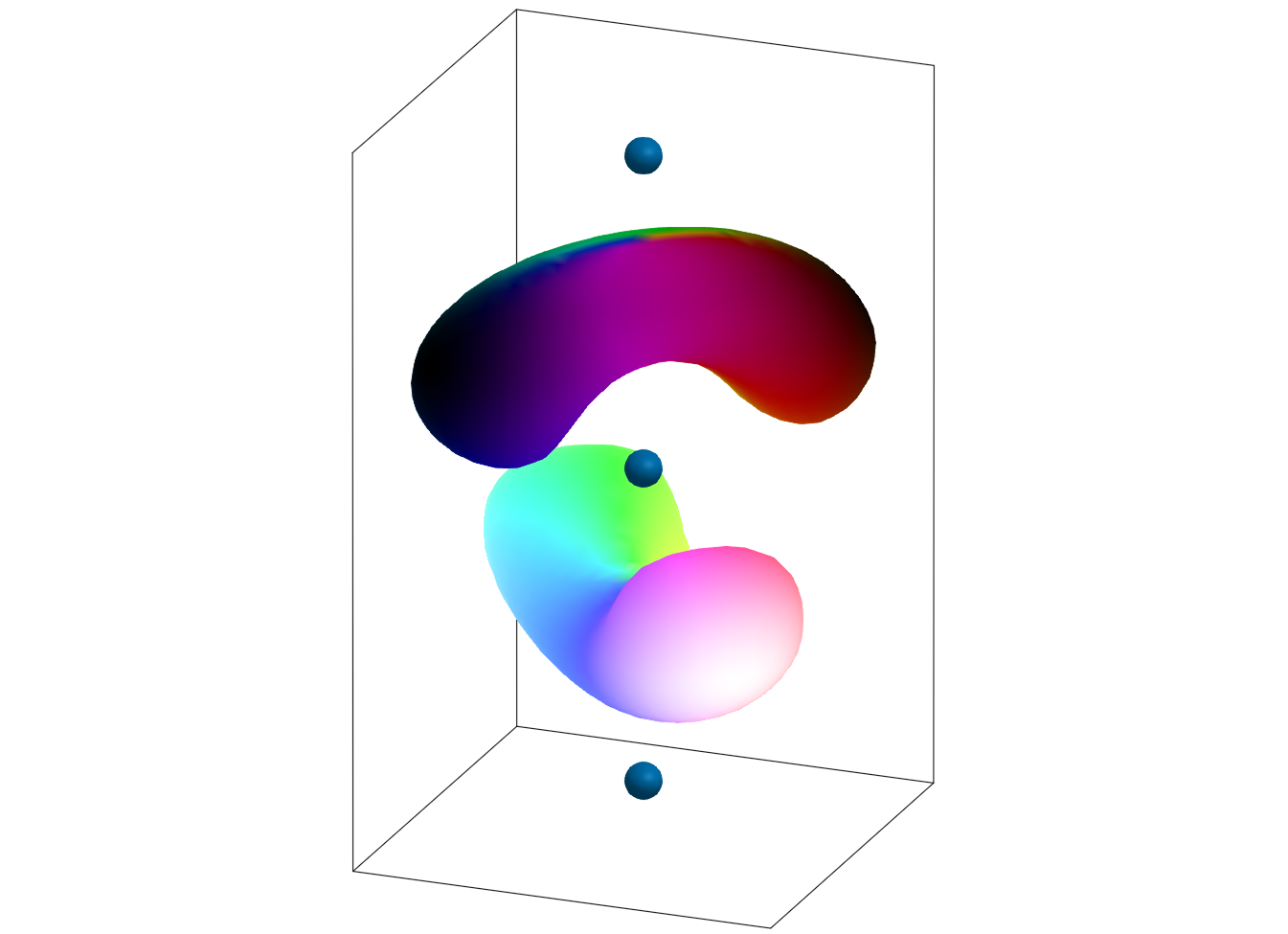}
         \caption{}
         \label{subfig: g=1.6}
     \end{subfigure}
     \caption{Plot of baryon density isosurfaces, and the locations coloured by sign (orange $-1$ and blue $+1$). For a) $\gamma =0$, b) $\gamma = 0.5$, c) $\gamma=1.0$, d) $\gamma=1.4$, e) $\gamma = 1.54$, f) $\gamma=1.6$. The subfigures are not to scale. } 
     \label{fig: twisted line scattering plots}
\end{figure}

    It is appropriate to comment on how Figure \ref{fig: twisted line scattering plots} was produced. Chris Halcrow has written a Julia package `Skyrmions3D' for numerically studying skyrmions \cite{Halcrow23}, and we have written functions within the framework of that package for handling JNR skyrmions. This code, and a tutorial for its use, are at present publicly available from \url{https://github.com/DisneyHogg/JNR_skyrmions}. The baryon density isovalue in the figure is chosen automatically by Skyrmions3D to be $\frac{1}{4}(\text{min. density} + \text{max. density})$. The value of $\lambda=1.51$ and $\mu=3.24$ were fixed through the figures, and chosen to minimise the energy of the tetrahedral skyrmion according to our numerics. We also note that, although not included in the plots of Figure \ref{fig: twisted line scattering plots}, for suitably large $\gamma\gg1$ the baryon density isosurfaces resemble, as expected, three distinct well-separated spherical surfaces with locations inside the surface.

\section{Conclusions and outlook}\label{sec:concl}

This work has developed new analytic and numerical tools to study JNR skyrmions, demonstrating in the process an unexpected depth to the problem of determining the number and sign of constituent locations. Open problems remain, for instance it is not clear without directly calculating all locations how many one should expect from a given JNR configuration, and at present (aside from situations with symmetry) we lack a method to construct a JNR configuration to achieve a desired arrangement of locations without resorting to the 't Hooft limit. Moreover, additional work is required to understand the nature of negatively signed locations. The analogy with BPS monopoles suggests two important avenues of investigation. 

The creation and annihilation of negatively signed locations through the twisted line scattering studied in \S\ref{subsec: twisted line scattering} is behaviour observed for both Euclidean monopoles and JNR skyrmions which cannot be claimed to follow from a connection with rational maps, as skyrmions generated via the rational map ansatz cannot have negatively signed locations. Hyperbolic monopoles have been constructed via JNR data \cite{bolognesiCockburnSutcliffe2014hyperbolicJNR}, but the configurations used are quite different from those studied in this work. While for JNR skyrmions we provide the interpretation of the splitting points as the vanishing of a hessian at critical points, for Euclidean monopoles these are observed in \cite{sutcliffe1996zeros} to coincide with the vanishing of a discriminant of the spectral curve. Whether these two perspectives can be married up warrants additional study.

An observation in \cite{sutcliffe1996zeros} borne out through analysis of Taubes is that for Euclidean monopoles, though one can get negatively signed zeros of the Higgs field inside clusters of positively signed zeros such as with the tetrahedral 3-monopole, asymptotically one can never see well-separated negatively charged clusters in an overall positively charged monopole. An intuition for this arises from \cite{Manton1977}, which shows that there is an asymptotic attractive Coulomb force between an antimonopole and a monopole, meaning that such configurations would not be stable. We might wonder whether the same is true for the JNR skyrmion locations. The locations are the critical points of $f=\sum_i\lambda_i^2\log|x-a_i|$; equivalently these are the critical points of $F=\prod_i|x-a_i|^{\lambda_i^2}$ away from the poles. The poles are the absolute minima of $F$, and by Theorem \ref{thm:signs-locations-via-f} the negatively signed locations are local minima. We have already proven in Proposition \ref{prop.locations-convex-hull} that all locations lie within the convex hull of the poles, and so heuristically one expects any local minima to be somehow shielded by the saddle points (positively signed locations) which would lie somewhere ``between" the local minima and global minima (the poles). However, we have not been able to produce a precise formulation of this intuition.

Given their explicit Skyrme field and simple construction, JNR skyrmions may serve as a useful class of fields with which to begin probing the configuration space of skyrmions. In particular, one may use the Skyrme Lagrangian to get a metric on the moduli space \cite{corkhalcrow2024quantization,AtiyahManton1993}. With our new understanding of skyrmion locations, we are able to give a clear and algebraic definition of `separation' of two constituent 1-skyrmions in an $N=2$ skyrmion, which has proven a challenge in the past \cite{Verbaarschot1987, HosakaOkaAmado1991}. In addition, as highlighted in Example \ref{ex: degenerate single location} our approach lets us identify the family of JNR skyrmions which have a single location with multiplicity $+2$ at the origin, which gives a family whose energy variation must thus be attributed to vibrational modes. 

It is well-known that JNR data does not generate all instantons beyond $N=2$, and in particular many important Skyrme field configurations cannot be generated from the JNR ansatz; the first most important example is the $N=4$ cube \cite{LeeseManton1994stable}. Fortunately, some of the ideas and results presented here are not limited to this class of Skyrme fields. In particular, the result of Lemma \ref{lem:local-degree} applies to any Skyrme field of the form $U=(\frac{q}{|q|})^2$; in this situation the locations are given by $\Re(q)=0$, and the local multiplicity by the degree of $\wh{\Im(q)}$ as a map of $2$-spheres. This motivates revisiting the open problem of \cite{harlandsutcliffe2023rational} of classifying all ADHM data which give rise to rational skyrmions. In fact, given that some important topological information about ADHM skyrmions may be computed from ADHM \cite{corkharland2024FR}, it would be lucrative to also determine a way of computing the locations and signed multiplicity directly from the ADHM data itself.
\section*{Statements and declarations}
\subsection*{Acknowledgements}
JC acknowledges support provided by the University of Leeds School of Mathematics Research Visitor Centre, where much of this work was completed. The research of LDH is supported by a UK Engineering and Physical Sciences Research Council (EPSRC) doctoral prize fellowship.
\subsection*{Data availability statement}
The data that support the findings of this article are available upon reasonable request from the authors. Supporting Julia code is available from \url{https://github.com/DisneyHogg/JNR_skyrmions}.
\subsection*{Conflict of interests}
The authors have no competing interests to declare that are relevant to the content of this article.
\appendix
\section{Proof of Lemma \ref{lem:sign-sig}}\label{app:conj-sign}
Here we prove Lemma \ref{lem:sign-sig}, which states:
\begin{quote}
    Let $N\geq1$, $\{a_0,a_1,\dots,a_N\}\in\R^3$ be distinct, $\lambda_i>0$ and $\mu>0$. Let $x_\ast\in\R^3\setminus\{a_i\}$ be such that $\zeta(x_\ast)=0$. Then
    \begin{align}
    S&=\sum_{i,j = 0}^N \frac{\lambda_i^2 \lambda_j^2 \ip{a_j - a_i,a_j-x_\ast}}{(\abs{a_i-x_\ast}^2 + \mu^2) (\abs{a_j-x_\ast}^2 + \mu^2) \abs{a_j-x_\ast}^2}>0.
\end{align}
\end{quote}
\begin{proof}
By the proof of Proposition \ref{prop.locations-convex-hull}, we see that $\zeta(x_\ast)=0$ if and only if
    \begin{align}\label{zero-convex}
    x_\ast=\frac{\sum_k\tfrac{\lambda_k^2}{|a_k-x_\ast|^2}a_k}{\sum_k\tfrac{\lambda_k^2}{|a_k-x_\ast|^2}}.
\end{align}
Let $v_i=\frac{\lambda_i^2}{|a_i-x_\ast|^2}(a_i-x_\ast)$. Then $v_i\neq0$ and we also have
\begin{align*}
    a_i-x_\ast=\frac{\lambda_i^2}{|v_i|^2}v_i.
\end{align*}
Thus we may write
\begin{align*}
    S&=\sum_{i,j=0}^N\frac{\lambda_i^2\lambda_j^2\ip{\tfrac{\lambda_j^2}{|v_j|^2}v_j-\tfrac{\lambda_i^2}{|v_i|^2}v_i,\tfrac{\lambda_j^2}{|v_j|^2}v_j}}{\left(\tfrac{\lambda_i^4}{|v_i|^2}+\mu^2\right)\left(\tfrac{\lambda_j^4}{|v_j|^2}+\mu^2\right)\tfrac{\lambda_j^4}{|v_j|^2}}=\sum_{i,j=0}^N\frac{\lambda_i^2\lambda_j^2-\ip{v_i,v_j}\tfrac{\lambda_i^4}{|v_i|^2}}{\left(\tfrac{\lambda_i^4}{|v_i|^2}+\mu^2\right)\left(\tfrac{\lambda_j^4}{|v_j|^2}+\mu^2\right)}.
\end{align*}
This is of the form $S=p-q$ where
\begin{align*}
\begin{aligned}
p&=\sum_{i,j=0}^N\frac{\lambda_i^2\lambda_j^2}{\left(\tfrac{\lambda_i^4}{|v_i|^2}+\mu^2\right)\left(\tfrac{\lambda_j^4}{|v_j|^2}+\mu^2\right)}>0,&
q&=\sum_{i,j=0}^N\frac{\ip{v_i,v_j}\tfrac{\lambda_i^4}{|v_i|^2}}{\left(\tfrac{\lambda_i^4}{|v_i|^2}+\mu^2\right)\left(\tfrac{\lambda_j^4}{|v_j|^2}+\mu^2\right)}.
\end{aligned}
\end{align*}
It suffices therefore to prove that $q\leq0$.

Note that
\begin{align*}
    q
    &=\frac{\lambda_0^4}{\left(\tfrac{\lambda_0^4}{|v_0|^2}+\mu^2\right)^2}+\frac{1}{\tfrac{\lambda_0^4}{|v_0|^2}+\mu^2}\sum_{i=1}^N\frac{\left(\tfrac{\lambda_i^4}{|v_i|^2}+\tfrac{\lambda_0^4}{|v_0|^2}\right)\ip{v_i,v_0}}{\tfrac{\lambda_i^4}{|v_i|^2}+\mu^2}\\
    &\qquad\qquad+\sum_{i,j=1}^N\frac{\ip{v_i,v_j}\tfrac{\lambda_i^4}{|v_i|^2}}{\left(\tfrac{\lambda_i^4}{|v_i|^2}+\mu^2\right)\left(\tfrac{\lambda_j^4}{|v_j|^2}+\mu^2\right)},
\end{align*}
and the condition \eqref{zero-convex} gives $\sum_iv_i=0$. As such we can eliminate $v_0$ entirely from this expression by writing
\begin{align}\label{v0-ids}
    v_0&=-\sum_{i=1}^Nv_i,\quad |v_0|^2=\sum_{i,j=1}^N\ip{v_i,v_j}.
\end{align}
Introducing the notation $\lambda_0^4:=\nu$ and $|v_0|^2:=R$ for simplicity of presentation, inserting the identities \eqref{v0-ids} and rearranging, we obtain
\begin{multline*}
    (\nu+R\mu^2)^2q=R^2\nu+\left(\nu+R\mu^2\right)^2\sum_{i,j=1}^N\frac{\tfrac{\lambda_i^4}{|v_i|^2}\ip{v_i,v_j}}{\left(\tfrac{\lambda_i^4}{|v_i|^2}+\mu^2\right)\left(\tfrac{\lambda_j^4}{|v_j|^2}+\mu^2\right)}\\
    -(\nu+R\mu^2)\sum_{i,j=1}^N\frac{\left(\tfrac{\lambda_i^4}{|v_i|^2}R+\nu\right)\ip{v_i,v_j}}{\tfrac{\lambda_i^4}{|v_i|^2}+\mu^2}.
\end{multline*}
This is a polynomial in $\nu$ of the form $(\nu+R\mu^2)^2q=a+b\nu+c\nu^2$, where
\begin{align*}    
a&=-R^2\mu^2\sum_{i,j=1}^N\frac{\tfrac{\lambda_i^4}{|v_i|^2}\tfrac{\lambda_j^4}{|v_j|^2}\ip{v_i,v_j}}{\left(\tfrac{\lambda_i^4}{|v_i|^2}+\mu^2\right)\left(\tfrac{\lambda_j^4}{|v_j|^2}+\mu^2\right)}=-R^2\mu^2\left|\sum_{i=1}^N\frac{\tfrac{\lambda_i^4}{|v_i|^2}}{\tfrac{\lambda_i^4}{|v_i|^2}+\mu^2}v_i\right|^2,\\
b&=R^2+2R\mu^2\sum_{i,j=1}^N\frac{\tfrac{\lambda_i^4}{|v_i|^2}\ip{v_i,v_j}}{\left(\tfrac{\lambda_i^4}{|v_i|^2}+\mu^2\right)\left(\tfrac{\lambda_j^4}{|v_j|^2}+\mu^2\right)}-R\sum_{i,j=1}^N\ip{v_i,v_j}\\
&=2R\mu^2\sum_{i,j=1}^N\frac{\tfrac{\lambda_i^4}{|v_i|^2}\ip{v_i,v_j}}{\left(\tfrac{\lambda_i^4}{|v_i|^2}+\mu^2\right)\left(\tfrac{\lambda_j^4}{|v_j|^2}+\mu^2\right)},\\
c&=\sum_{i,j=1}^N\frac{\left(\tfrac{\lambda_i^4}{|v_i|^2}-\tfrac{\lambda_j^4}{|v_i|^2}\right)\ip{v_i,v_j}}{\left(\tfrac{\lambda_i^4}{|v_i|^2}+\mu^2\right)\left(\tfrac{\lambda_j^4}{|v_j|^2}+\mu^2\right)}-\sum_{i,j=1}^N\frac{\mu^2\ip{v_i,v_j}}{\left(\tfrac{\lambda_i^4}{|v_i|^2}+\mu^2\right)\left(\tfrac{\lambda_j^4}{|v_j|^2}+\mu^2\right)}\\
&=-\mu^2\left|\sum_{i=1}^N\frac{v_i}{\tfrac{\lambda_i^4}{|v_i|^2}+\mu^2}\right|^2;
\end{align*}
in the last line we used the anti-symmetry in the indices $i,j$ to show the first sum is zero. Therefore
\begin{align}
    q=\frac{a+b\nu+c\nu^2}{(\nu+R\mu^2)^2}=-\frac{\mu^2}{(\nu+R\mu^2)^2}\left|\sum_{i=1}^N\frac{\left(R\tfrac{\lambda_i^4}{|v_i|^2}-\nu\right)v_i}{\tfrac{\lambda_i^4}{|v_i|^2}+\mu^2}\right|^2\leq0.
\end{align}
Thus as $p>0$ and $q\leq0$, we have $S=p-q>0$ as required.
\end{proof}
\begin{remark}
    As a consequence of \eqref{zero-convex}, we can write
    \begin{align}\label{sig2}
        S=\frac{1}{\rho(0)}\sum_{i,j,k=0}^N\frac{\lambda_i^2\lambda_j^2\lambda_k^2\ip{y_j-y_i,y_j-y_k}}{(|y_i|^2+\mu^2)(|y_j|^2+\mu^2)|y_j|^2|y_k|^2},
    \end{align}
    where $\rho(\mu)$ is given in \eqref{JNR-sky-functions}, and $y_i=a_i-x_\ast$. The above proof shows that this is positive so long as $\sum_{i}\tfrac{\lambda_i^2}{|y_i|^2}y_i=0$, however we conjecture that the sum in \eqref{sig2} is positive for any choice of $y_i\in\R^3\setminus\{a_i\}$. We have a proof of this for $N=2$, for $N=3$ with equal weights, and when the $y_i$ are on a single sphere, but a general proof without any constraint on the $y_i$ appears to be a hard combinatorial problem.
\end{remark}
\bibliographystyle{unsrt}
\bibliography{refs}
\end{document}